\documentclass[sigconf,authordraft=false,review=false]{acmart}
\usepackage{sigirstyle}
\usepackage{appendix}
\usepackage{balance}
\usepackage{amsmath}
\usepackage{booktabs}
\usepackage{tabularx}
\usepackage{caption}
\usepackage{subcaption}
\usepackage{adjustbox}
\usepackage{caption}
\usepackage{hyperref}

\usepackage[utf8]{inputenc}
\usepackage{url}

\usepackage{calrsfs}

\DeclareMathAlphabet{\pazocal}{OMS}{zplm}{m}{n}
\DeclareMathAlphabet{\pazobfcal}{OMS}{cmsy}{b}{n}

\copyrightyear{2026}
\acmYear{2026}
\setcopyright{cc}
\setcctype{by}
\acmConference[ICTIR '26] {Proceedings of the 2026 International ACM SIGIR Conference on Innovative Concepts and Theories in Information Retrieval (ICTIR)}{July 25, 2026}{Melbourne, VIC, Australia.}
\acmBooktitle{Proceedings of the 2026 International ACM SIGIR Conference on Innovative Concepts and Theories in Information Retrieval (ICTIR) (ICTIR '26), July 25, 2026, Melbourne, VIC, Australia}
\acmISBN{979-8-4007-2600-2/2026/07}
\acmDOI{10.1145/3805713.3820421}

\begin{document}


\title{A Theoretical Framework for Risk Analysis of Stochastic Rankers}

\author{Debasis Ganguly}
\affiliation{%
  \institution{University of Glasgow}
  \city{Glasgow}
  \country{United Kingdom}
}
\email{debasis.ganguly@glasgow.ac.uk}

\begin{abstract}

Different from deterministic rankers that seek to maximize relevance at top ranks,
stochastic ranking policies instead estimate distributions over permutations, from
which rankings are sampled, towards obtaining diversified or fair exposure.
Such policies are commonly evaluated in terms of expected effectiveness post-reranking.
However, the randomness inherent in these policies gives rise to a fundamental but
under-explored \emph{ex ante} question: prior to applying stochastic reranking, how large
can the induced variation in retrieval effectiveness be in the worst case?
This paper presents a theoretical analysis of \emph{reranking risk}, defined as the maximum
absolute change in discounted cumulative gain (DCG) resulting from a permutation sampled
from a stochastic reranking policy applied to a fixed retrieved list.
We derive that this risk is governed by the distribution of the recall points in the
initial retrieved list.
We conduct experiments on submitted runs from the TREC Fairness 2022 track that
employ stochastic reranking policies and empirically demonstrate that the effectiveness
variations predicted by our theory closely approximate the observed changes in DCG.

%
%

\end{abstract}

\begin{CCSXML}
<ccs2012>
<concept>    <concept_id>10002951.10003317.10003325.10003327</concept_id>
    <concept_desc>Information systems~Query intent</concept_desc>
    <concept_significance>500</concept_significance>
</concept>
<concept>
    <concept_id>10002951.10003317.10003325</concept_id>
    <concept_desc>Information systems~Information retrieval query processing</concept_desc>
    <concept_significance>500</concept_significance>
</concept>
</ccs2012>
\end{CCSXML}
\ccsdesc[500]{Information systems~Query intent}
\ccsdesc[500]{Information systems~Information retrieval query processing}
\keywords{Query Performance Prediction,
Stochastic Ranking,
Fairness in Information Retrieval,
Retrieval Fairness}

\maketitle


\section{Introduction}
\label{sec:intro}

Unlike classic ranking algorithms that produce a single, static
ranking for a given query, a stochastic ranking policy provides a distribution over all possible rankings of
candidate items \cite{diazcikm2020,Yadav2019PolicyGradientTO}.
%
%
Stochastic ranking policies are motivated by the objective of decoupling relevance-based
retrieval from the presentation of a single, fixed ranking to all users \cite{Heuss2022FairnessOE,diazcikm2020}.
Rather than committing to one deterministic ordering of retrieved documents for a given
query, such policies define a distribution over rankings, from which different orderings
may be presented across users or interactions \cite{Bower2021IndividuallyFR}.
This enables exposure to be allocated across documents in a controlled yet
non-deterministic manner.

%
For instance, a deterministic ranking produced for a query such as ``successful CEOs'' may lead to a disproportionate representation of demographic groups even in
the presence of equally relevant alternatives \cite{DBLP:conf/aaai/SenG20}.
By sampling from a distribution over plausible rankings, stochastic policies allow
for a more balanced allocation of exposure while maintaining relevance in expectation \cite{Biega2018EquityOA,DBLP:conf/trec/EkstrandMR022}.
For this use-case, an ideal stochastic ranker is likely to occasionally promote a relevant document describing a female CEO from an under-represented demographic group to a
higher rank, while demoting another relevant document describing a male CEO from a
historically over-represented group.
However, since document relevance is not directly observed but instead estimated as a
posterior conditioned on the query and available signals, such rank transpositions are not
without risk.
In particular, stochastic promotion may inadvertently elevate non-relevant documents,
thereby degrading overall retrieval effectiveness.
This introduces an intrinsic tension between exploiting stochasticity to improve
fairness-related objectives and preserving effectiveness, motivating the need for
principled risk-aware analysis.

In deterministic rerankers, this tension manifests itself in a different form. To be more precise, as
given a query and its retrieved documents, a deterministic ranker induces a single,
realized ordering~\cite{mono-duo-lin}; it is thus possible to ask whether that ranking is
likely to perform well.
In fact, this question has been the subject of extensive study under the umbrella of query
performance prediction (QPP), which aims to estimate retrieval effectiveness without
relevance judgments by exploiting signals derived from the query, the retrieved result
set, and the associated retrieval scores~\cite{kurland_tois12,uef_kurland_sigir10,qpp-bert,datta2022deep}.

Incorporating non-determinism in ranking makes it more challenging to answer the same very question that QPP answers for deterministic rankers, namely, \textit{how likely is the ranking to perform well}. This is because unlike deterministic rankers where the rank of a document is induced by its similarity score with the query, for stochastic rankers
the rank of any document in the reranked list is itself a random variable having little or no dependence on the similarity score.
%

The key question we study in this paper is not how well a stochastic reranker performs on average after repeated sampling \cite{diazcikm2020}, but rather the ex ante analysis of how may the retrieval effectiveness of a given retrieved list change due to the rank transpositions applied on it.
We formalize this quantity as \emph{reranking risk}
and we derive that this risk is determined by
two structural factors: the locations of the relevant documents in the initial ranking,
and the extent to which the policy allows documents to move across rank positions.
This theoretical finding of ours that the risk of reranking depends on the recall points and the aggressiveness of the rank shifts is intuitive because changes in IR effectiveness
can arise only from the rank transpositions involving relevant documents.

From a practical perspective, the ability to estimate reranking risk is particularly
relevant for stochastic rankers that are optimized with respect to objectives
complementary to relevance, such as fairness or diversity.
In such settings, not all rank transpositions contribute equally to effectiveness
variation: permutations that induce large displacements of relevant documents are
inherently more risky than those that preserve their rank locality.
A risk-aware analysis can potentially be used to discard the permutations with high predicted risks, thus improving overall performance measures such as expected exposure relevance \cite{diazcikm2020}.

A fundamental difference between the reranking risk estimation and QPP arises from the fact that while the former depends on the recall points, the latter makes no use of relevance assessments. While this may appear to be a limitation, a practical solution is to leverage document clicks as proxies for relevance or apply click models trained on past data to predict document clicks~\cite{joachims2005accurately,craswell2008experimental,chapelle2009dynamic}. 

The rest of the paper is organized as follows. Section \ref{sec:related} provides a brief background on stochastic ranking and query performance prediction. In Section \ref{sec:theory}, we formally derive the expected risk of reranking for a relatively simple case, i.e., when only a single relevant document is retrieved in the initial list. We later generalise the theory to address multiple relevant documents in Section \ref{sec:dcg_multi}. The theoretical analysis in both Sections \ref{sec:theory} and \ref{sec:dcg_multi} model a stochastic policy first as a uniform distribution that considers any rank swap equally likely, and second as a kernel-based distribution that assigns higher likelihood to local rank swaps. Section \ref{sec:experimental-setup} presents an empirical validation of the theoretical analysis for official runs submitted to the TREC fairness 2022 task. Finally, Section \ref{sec:conclusions} concludes the paper with directions for future work.

\section{Related Work}
\label{sec:related}

\para{Stochastic Ranking}

Stochastic ranking policies consist of two main components: a scoring model (typically a neural network such as \cite{mono-duo-lin}) to generate the
relevance scores and the initial list, and a sampling method that converts these scores into probability distributions over rankings.
The sampling method defines how to create a
probability distribution over all possible rankings based on these scores, some of which 
are listed below.
\uls
\li \textbf{Plackett-Luce (PL) based Ranking Models}: This is the most widely adopted approach for stochastic ranking policies, using Luce's axiom of choice to iteratively sample items without replacement with probabilities defined by softmax distributions over relevance scores \cite{Wu2022JointME,Togashi2022FairMF,Oosterhuis2021ComputationallyEO,Singh2019PolicyLF,Guo2023InferencetimeSR,Togashi2022FairMF}.

\li \textbf{Birkhoff-von Neumann Decomposition}: Converts linear programming formulations into practical stochastic ranking policies by decomposing doubly stochastic matrices into convex combinations of permutation matrices \cite{Vardasbi2022ProbabilisticPG}.

\li \textbf{Stochastic Gradient Descent with Differentiable Sampling}: Methods that replace deterministic score-to-permutation mappings with differentiable sampling from distributions defined by raw scores, improving model robustness during training \cite{Gorantla2023OptimizingGP,Bruch2020AST}.
\ule

\para{Query Performance Prediction (QPP)}

QPP aims to estimate retrieval effectiveness in the
absence of relevance judgments, often by modelling uncertainty in relevance distributions.
Traditional unsupervised QPP approaches rely on score-based statistics~\cite{kurland_tois12,wig_croft_SIGIR07}
and aggregation over query variants or reference lists~\cite{query_variants_kurland,DBLP:journals/tois/DattaGMG23,uef_kurland_sigir10,rsd_haggai}.
In contrast to score-based methods, embedding-based approaches leverage information from the local neighbourhoods of query and document representations to estimate retrieval quality~\cite{RoyEtAl2019,FaggioliFerroEtAl2023,pdd-tois}. 
Supervised QPP approaches, on the other hand, employ end-to-end neural models trained on query-document content features to predict retrieval effectiveness~\cite{qpp-bert,datta2022pointwise,datta2022deep}.

Beyond its conventional use for improving downstream retrieval performance~\cite{DBLP:conf/ecir/DattaGMG24}, QPP has recently evolved in two prominent directions.
First, in multi-stage retrieval pipelines, QPP can be used to selectively invoke computationally intensive downstream components when the predicted effectiveness of earlier stages is deemed insufficient~\cite{DBLP:conf/ecir/SantraBG26,DBLP:conf/ecir/SantraBG26a}.
Second, in the context of retrieval-augmented generation (RAG) for tasks such as complex question answering and exploratory search, recent work has examined the relationship between document relevance and its utility within a generative pipeline~\cite{DBLP:conf/ecir/TianGM25}.
This has led to the development of methods for predicting document utility, enabling more adaptive and efficient RAG workflows~\cite{rpp-gpp,santra2025hf}.



\section{Reranking Risk with one Relevant Document}
\label{sec:theory}

We begin our theoretical analysis with a simple setting in which exactly one relevant document appears in the initially retrieved list.
Despite its simplicity, this case is useful to reveal how the reranking risk depends on two factors: i) rank of the relevant document in the retrieved list, and ii) the characteristic of the ranking policy itself, i.e., how aggressively the stochastic ranker shifts the initial rank positions.
Without assuming any a-priori knowledge on the ranking policy used, we approximate the characteristics of a stochastic ranker by two permutation probability distributions -- i) uniform,  that assigns an equal likelihood to any rank transposition, and ii) a locality-biased distribution that assigns a higher likelihood to local rank swaps similar to Plackett-Luce \cite{Oosterhuis2021ComputationallyEO}.

\subsection{Problem Formulation}
\label{ss:setup-rr}
Let $Q$ denote a query and let $L_M(Q) = (d_1,\ldots,d_M)$
be the ranked list of the top-$M$ documents retrieved for $Q$, where the index indicates rank position.
We study the effect of reranking on the IR effectiveness of $L_M(Q)$.
A reranking operation is modeled as a permutation $\sigma \in S_M$ acting on the rank positions, yielding the perturbed ranked list $\sigma(L_M(Q))$, which we denote as $L^\sigma_M(Q)$ for brevity.

Let $\mu(Q; L_M(Q))$
denote the effectiveness of $L_M(Q)$ under a target IR metric $\mu$.
Because most IR metrics are rank-based (rather than set-based), applying a permutation $\sigma$
generally induces a change in effectiveness, i.e.,
$\mu(Q; L_M(Q)) \neq \mu(Q; L^\sigma_M(Q))$.    
%
The reranking-induced effectiveness change can now be written as:
\begin{equation}
\Delta_\mu(Q,\sigma)
=
\left|
\mu(Q; L_M(Q)) - \mu(Q; L^\sigma_M(Q))
\right|,
\label{eq:delta_mu_def}
\end{equation}
which quantifies the risk introduced by reranking a given ranked list for query $Q$.

\subsection{Rank Swaps with Uniform Likelihood}

For the analysis in this section, we now assume that only a single document in the ranked list $L_M(Q)$ is relevant. We denote the rank of this relevant document as a random variable $K \in \{1,\ldots,M\}$, which means that $\exists K=k: d_k \in \pazocal{R}(Q)$, where $\pazocal{R}(Q)$ denotes the set of documents that are relevant to the query $Q$.
The suitable metric to consider in Equation \ref{eq:delta_mu_def} under the assumption of only one relevant document is the reciprocal rank (RR), i.e.,
\begin{equation}
\mu(Q; L_M(Q)) = \mathrm{RR}(Q; L_M(Q)) = \frac{1}{k} \notag.
\end{equation}
Let $\sigma \in S_M$ be a permutation applied to the ranked list.
The rank where the relevant document moves to is also a random variable of the form $R=\sigma(K)$. For a particular observation $R=r$ of this random variable, the value of the RR metric changes to: 
$$
\mathrm{RR}(Q; L^\sigma_M(Q)) = 1/r.
$$
We define the permutation-induced RR metric change as
\begin{equation}
\Delta_{\mathrm{RR}}(Q,\sigma)
=
\left|
\mu(Q; L_M(Q)) - \mu(Q; L^\sigma_M(Q))
\right|
=
\left|\frac{1}{k} - \frac{1}{r}\right|.
\label{eq:rr_gap}
\end{equation}

For uniform rank perturbations, the new rank $R$ of the relevant document is drawn uniformly
from $\{1,\ldots,M\} \setminus \{k\}$:
\[
\Pr(R=r \mid K=k) = \frac{1}{M-1}, \quad r \neq k.
\]
The expected change in reciprocal rank induced by a uniform rank perturbation can be
written as
\begin{equation}
\mathbb{E}\!\left[\Delta_{\mathrm{RR}} \mid K = k\right]
=
\frac{1}{M-1}
\sum_{r \neq k}
\left|
\frac{1}{k} - \frac{1}{r}
\right|.
\label{eq:expectederror-start}
\end{equation}

To evaluate the sum, we split it into two parts corresponding to ranks above and
below $k$.
For $r < k$, we have $\left| \frac{1}{k} - \frac{1}{r} \right| = \frac{1}{r} - \frac{1}{k}$,
whereas for $r > k$, we have $\left| \frac{1}{k} - \frac{1}{r} \right| = \frac{1}{k} - \frac{1}{r}$.
Substituting these expressions into Equation~\ref{eq:expectederror-start} yields
\begin{equation}
\mathbb{E}\!\left[\Delta_{\mathrm{RR}} \mid K = k\right]
=
\frac{1}{M-1}
\left(
\sum_{r=1}^{k-1}
\left( \frac{1}{r} - \frac{1}{k} \right)
+
\sum_{r=k+1}^{M}
\left( \frac{1}{k} - \frac{1}{r} \right)
\right).
\label{eq:expectederror-split}
\end{equation}

We now simplify each term separately.
The first sum can be written as
\[
\sum_{r=1}^{k-1} \left( \frac{1}{r} - \frac{1}{k} \right)
=
\sum_{r=1}^{k-1} \frac{1}{r}
-
\frac{k-1}{k}
=
H_{k-1} - \frac{k-1}{k},
\]
where $H_n = \sum_{i=1}^n \frac{1}{i}$ denotes the $n$-th harmonic number.
Similarly, the second sum satisfies
\[
\sum_{r=k+1}^{M} \left( \frac{1}{k} - \frac{1}{r} \right)
=
\frac{M-k}{k}
-
\sum_{r=k+1}^{M} \frac{1}{r}
=
\frac{M-k}{k}
-
\left( H_M - H_k \right).
\]
Substituting these expressions back into Equation~\ref{eq:expectederror-split} gives
\begin{equation}
\mathbb{E}\!\left[\Delta_{\mathrm{RR}} \mid K = k\right]
=
\frac{
H_{k-1}
- \frac{k-1}{k}
+ \frac{M-k}{k}
- (H_M - H_k)
}{M-1}.
\label{eq:expectederror-expanded}
\end{equation}
Using the identity $H_k = H_{k-1} + \frac{1}{k}$, we can simplify the numerator of
Equation~\ref{eq:expectederror-expanded} to obtain
\begin{equation}
\mathbb{E}\!\left[\Delta_{\mathrm{RR}} \mid K = k\right]
=
\frac{
2H_{k-1}
- H_M
+ \frac{M - 2k + 1}{k}
}{M-1},
\label{eq:expectederror-singledoc}
\end{equation}
which yields the closed-form expression used in the subsequent asymptotic analysis.
We now characterize the dominant asymptotic behaviour by the following proposition.
%



\begin{proposition}[Reranking Risk under Uniform Rank Perturbations] \label{prop:sd-uniform} Consider a ranked list containing a single relevant document at rank $K=k$. Under uniform rank swap perturbations, \[ \mathbb{E}\!\left[\Delta_{\mathrm{RR}} \mid K=k\right] = \frac{ 2H_{k-1} -H_M +\frac{M-2k+1}{k} }{M-1}. \] Furthermore, for ranks satisfying $k=\Theta(M)$, \[ \mathbb{E}\!\left[\Delta_{\mathrm{RR}} \mid K=k\right] = \Theta\!\left(\frac{\log k}{M}\right). \] \end{proposition}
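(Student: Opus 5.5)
The first claim, the closed form, is exactly what the computation preceding the statement derives, so I would make that part of the proof simply a verification of that chain of equalities. Starting from Equation~\ref{eq:expectederror-start}, the sum splits at $r=k$ into the two monotone pieces of Equation~\ref{eq:expectederror-split}, since $1/r$ is decreasing. Each piece is then written with harmonic numbers, and $H_k=H_{k-1}+\frac1k$ is used to merge $H_{k-1}+H_k$ into $2H_{k-1}+\frac1k$. The only thing to check is that the constant terms combine correctly:
\[
-\frac{k-1}{k}+\frac{M-k}{k}+\frac1k=\frac{M-2k+2}{k}.
\]
This differs from the stated $\frac{M-2k+1}{k}$ by $\frac1k$, so I would recheck the bookkeeping. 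Either way the discrepancy is $O(1/k)$ and does not affect the asymptotic claim.

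For the asymptotic part, I would write $k=cM$ with $c\in[c_0,1]$ for some fixed $c_0>0$, which is what $k=\Theta(M)$ means. I would then use the standard estimate $H_n=\ln n+\gamma+O(1/n)$. This gives
\[
2H_{k-1}-H_M=2\ln k-\ln M+\gamma+O(1/k)=\ln k+\ln c+\gamma+o(1),
\]
because $\ln M=\ln k-\ln c$. The rational term satisfies
\[
\frac{M-2k+O(1)}{k}=\frac1c-2+o(1),
\]
which lies in the bounded interval $[-2,\,1/c_0]$. So the numerator equals $\ln k+O(1)$, with the $O(1)$ uniform in $c\in[c_0,1]$. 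The denominator $M-1$ is $\Theta(M)$.

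The step needing care is the lower bound, since the additive constant could be negative, for instance when $c=1$. Here I would argue that $\ln k\to\infty$ because $k\ge c_0M\to\infty$. Hence for $M$ large enough the numerator is at least $\frac12\ln k$, and for small $M$ finitely many cases remain. For those cases positivity is guaranteed directly: by Equation~\ref{eq:expectederror-start} the expectation is an average of nonnegative terms, and the term with $r=1$ is strictly positive whenever $k\ge2$. Combining the upper bound $\ln k+O(1)\le 2\ln k$ with this lower bound gives $\mathbb{E}[\Delta_{\mathrm{RR}}\mid K=k]=\Theta(\log k/M)$. Since $\log k=\log M+O(1)$ in this regime, this is equivalently $\Theta(\log M/M)$.
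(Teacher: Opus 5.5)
Your arithmetic is correct, and you should commit to it rather than leaving it as something to recheck: the closed form in the statement is false. The paper establishes it by exactly the chain you retrace, and Equation~\ref{eq:expectederror-expanded} is right. In passing to Equation~\ref{eq:expectederror-singledoc}, however, the $\frac1k$ coming from $H_k=H_{k-1}+\frac1k$ is lost. The numerator should therefore be $2H_{k-1}-H_M+\frac{M-2k+2}{k}$, equivalently $2H_k-H_M+\frac{M-2k}{k}$. The smallest case settles it. For $M=2$, $k=1$ the only term is $r=2$, so the expectation is $1-\frac12=\frac12$. The stated formula gives $0-\frac32+1=-\frac12$, a negative value for the mean of a nonnegative quantity. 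So the first claim can only be proved in corrected form, which your derivation does.

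For the asymptotic claim, the paper gives no argument at all; it simply asserts $\Theta(\log k/M)$ after the closed form. Your argument supplies what is missing and is sound. Writing $k=cM$ with $c\in[c_0,1]$ and using $H_n=\ln n+\gamma+O(1/n)$, the numerator is $\ln k+\ln c+\gamma+\frac1c-2+o(1)$. Its additive constant is bounded uniformly in $c$, can be negative (for example $\gamma-1$ at $c=1$), and is eventually dominated by $\ln k\to\infty$. The $\frac{1}{k(M-1)}$ discrepancy is $O(M^{-2})$, so the asymptotic statement holds for either version of the constant.

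Your separate treatment of small $M$ is unnecessary, since a $\Theta$ statement only concerns sufficiently large $M$, where $k\ge c_0M\ge 2$ automatically. It also could not rescue the upper bound in a case with $k=1$, where $\log k=0$ but the expectation equals $\frac{M-H_M}{M-1}>0$.
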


\begin{figure}[t]
    \centering
    \includegraphics[width=0.9\columnwidth]{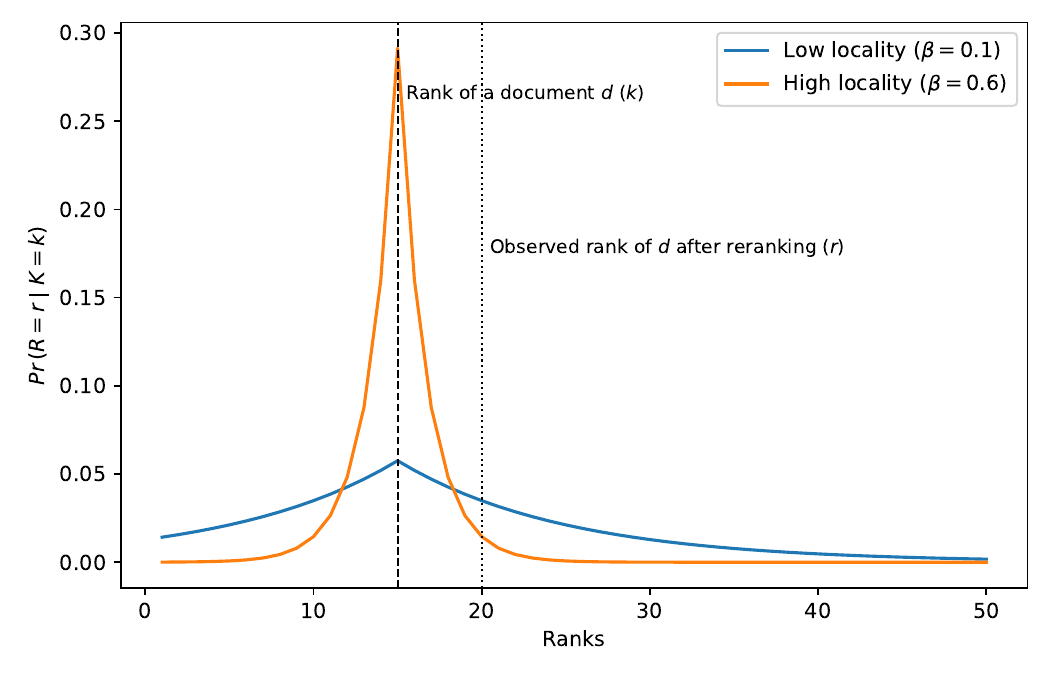}
    \caption{
    Effect of the locality parameter $\beta$ on the rank perturbation model (Equation \ref{eq:localityranker}).
    The figure shows the conditional distribution $P(R=r \mid K=k)$ for a document
    originally at rank $k$, under two different values of $\beta$.
    Smaller values of $\beta$ yield broader distributions with higher probability
    of large rank displacements, while larger values of $\beta$ concentrate
    probability mass near the original rank, corresponding to more local reranking behaviour.
    }
    \label{fig:locality-kernel}
\end{figure}

Proposition~\ref{prop:sd-uniform} shows that the sensitivity of reciprocal rank to uniform perturbations depends strongly on the initial position of the relevant document. Documents appearing near the head of the ranking experience the largest expected change, since a random transposition is likely to move them far from their highly valuable original position. For documents occurring deeper in the ranking ($k=\Theta(M)$), the expected perturbation is substantially smaller and scales as $\Theta(\log k/M)$.


\subsection{Locality-Biased Rank Perturbations} \label{ss:localrankswapmodel-sd}

Uniform rank swaps represent a worst-case perturbation model in which a relevant
document may be displaced arbitrarily far from its original position.
In practice, however, reranking mechanisms tend to induce more localized changes \cite{joachims-kdd2018,diazcikm2020}.
To capture this behaviour, we consider locality-biased rank perturbations motivated
by stochastic ranking models such as Plackett--Luce.

The Plackett--Luce (PL) model defines a probability distribution over full
permutations by associating each document with a latent utility and sampling the
ranking sequentially. While PL provides a principled probabilistic model of ranking,
its induced marginal distribution over the rank of a single document does not admit
a simple closed form. Nevertheless, a key qualitative property of PL-type models is
that documents tend to concentrate around their expected rank, with probability
decaying as the displacement from that rank increases. As a result, most sampled
permutations differ from a reference ranking through relatively small, local
reorderings rather than global reshuffles.

To make the notion of locality explicit while retaining analytical tractability, we
model the marginal effect of reranking on the rank of a single relevant document
using a symmetric distance-based kernel, specifically: a discrete Laplace (two-sided exponential) distribution (see Figure \ref{fig:locality-kernel}).
Formally speaking,
\begin{equation}
\Pr(R = r \mid K = k)
=
\frac{\exp(-\beta |r-k|)}{Z_k},
\quad r \neq k,
\label{eq:localityranker}
\end{equation}
where the normalization constant
\[
Z_k = \sum_{r \neq k} \exp(-\beta |r-k|)
\]
ensures that the distribution sums to one over all valid target ranks.
We now characterize how imposing locality on rank perturbations changes the asymptotic bound derived in Proposition \ref{prop:sd-uniform}.

\begin{proposition}[Asymptotic reranking risk under locality-biased perturbations]
\label{prop:rr_local_asymptotic}
For a ranked list $L_M(Q)$ with a single relevant document at rank $K=k$, the expected
change in reciprocal rank (RR) under the influence of a reranker favouring local swaps satisfies
\[
\mathbb{E}\left[\Delta_{\mathrm{RR}} \mid K=k\right]
=
\pazocal{O}\!\left(\frac{1}{\beta k}\right),
\]
which is independent of $M$.
\label{prop:sd-local}
\end{proposition}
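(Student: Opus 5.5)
We bound $\mathbb{E}[\Delta_{\mathrm{RR}}\mid K=k]$ directly from the kernel in Equation~\ref{eq:localityranker}, writing the displacement as $j=r-k\neq 0$. The first step is a pointwise bound on the summand that is linear in the displacement:
\[
\left|\frac{1}{k}-\frac{1}{r}\right| = \frac{|r-k|}{kr}.
\]
For $r>k$ this is at most $\frac{|j|}{k^2}\le\frac{|j|}{k}$. For $r<k$ the denominator $kr$ can be as small as $k$ (at $r=1$), so there we only use $\frac{|j|}{kr}\le \frac{|j|}{k}$, since $r\ge 1$. In both cases $|\frac1k-\frac1r|\le |j|/k$, and therefore
\[
\mathbb{E}[\Delta_{\mathrm{RR}}\mid K=k]\le \frac{1}{k}\cdot\frac{1}{Z_k}\sum_{r\neq k}|r-k|\,e^{-\beta|r-k|}.
\]

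The second step controls the normalized first absolute moment of the truncated discrete Laplace kernel, uniformly in $M$. For the numerator, extend the sum to all nonzero integers:
\[
\sum_{r\ne k}|r-k|e^{-\beta|r-k|}\le 2\sum_{j\ge1} j e^{-\beta j}=\frac{2e^{-\beta}}{(1-e^{-\beta})^2}.
\]
For the denominator, $Z_k$ always contains at least one neighbour ($r=k\pm1$ exists whenever $M\ge 2$), so $Z_k\ge e^{-\beta}$. The ratio is then at most $2/(1-e^{-\beta})^2$. This is $O(1/\beta^2)$ for small $\beta$ and $O(1)$ for large $\beta$, which does not yet match the claimed $1/\beta$.

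The main obstacle is this normalization mismatch. A one-sided kernel (e.g. $k=1$) has its mass on one side only, but the ratio is still a mean displacement of a geometric law: it equals $\frac{1}{1-e^{-\beta}}$, which is $\Theta(1/\beta)$ for small $\beta$ and $\Theta(1)$ for large $\beta$. To get this sharper bound I would lower-bound $Z_k$ by the full one-sided geometric sum on the longer side, $\sum_{j=1}^{\lceil (M-1)/2\rceil} e^{-\beta j}$. This is comparable to $e^{-\beta}/(1-e^{-\beta})$ once $M\gtrsim 1/\beta$, which is the regime where the kernel is genuinely local. Combining the two estimates, the ratio becomes
\[
O\!\left(\frac{1}{1-e^{-\beta}}\right)=O\!\left(1+\frac{1}{\beta}\right).
\]
This is $O(1/\beta)$ for $\beta$ bounded above, which I take to be the intended regime of local but nondegenerate swaps. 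For small $M$ relative to $1/\beta$, the truncated moment is at most $M\le 1/\beta$ trivially, so the bound persists.

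Finally, multiplying by the $1/k$ prefactor gives $\mathbb{E}[\Delta_{\mathrm{RR}}\mid K=k]=\pazocal{O}(1/(\beta k))$. No constant depends on $M$, because extending the sums to infinity only loosened the numerator, and the lower bound on $Z_k$ used only a fixed number of terms or the $M\gtrsim 1/\beta$ regime. This contrasts with Proposition~\ref{prop:sd-uniform}, where the $1/(M-1)$ normalization ties the risk to $M$.
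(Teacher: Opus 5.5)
Your proof is correct and follows the paper's route: the same pointwise bound $\left|\frac{1}{k}-\frac{1}{r}\right|=\frac{|r-k|}{kr}\le\frac{|r-k|}{k}$, followed by an $\pazocal{O}(1/\beta)$ bound on $\mathbb{E}[|R-k|\mid K=k]$ obtained from the geometric-series moments $\sum_{d\ge 1}e^{-\beta d}$ and $\sum_{d\ge 1}d\,e^{-\beta d}$. The paper ignores boundary effects and treats the kernel as an untruncated two-sided law, whereas you bound the truncated normalization uniformly in $k$ and $M$ (lower-bounding $Z_k$ by the longer side, and handling $M\le C/\beta$ trivially) and state the restriction to $\beta$ bounded above, which the paper leaves implicit in its $\beta\to 0^+$ asymptotics; this makes the argument more rigorous without changing the approach.
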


The dependence on $1/k$ reflects the fact that reciprocal rank becomes less sensitive
as the relevant document moves deeper in the ranking, while the dependence on
$1/\beta$ captures the expected magnitude of locality-biased rank displacement.
Unlike the uniform perturbation model, the resulting bound does not depend on the
candidate set size $M$, since the probability mass of the locality-biased kernel is
concentrated around the original rank.

\begin{proof}
Conditioned on $K = k$, the reciprocal rank before reranking is $1/k$, and after
reranking it becomes $1/R$, where $R$ denotes the random rank of the relevant
document after reranking.
For any realization $r \neq k$, the absolute change in reciprocal rank satisfies
\[
\left| \frac{1}{k} - \frac{1}{r} \right|
=
\frac{|r-k|}{kr}.
\]
Since $r \geq 1$ for all valid rank positions, we have $kr \geq k$, and hence
\[
\frac{1}{kr} \leq \frac{1}{k}.
\]
Therefore,
\[
\left| \frac{1}{k} - \frac{1}{r} \right|
\leq
\frac{|r-k|}{k}.
\]

Taking expectations with respect to the conditional distribution of $R$ given
$K = k$, we obtain
\begin{equation}
\mathbb{E}\!\left[\Delta_{\mathrm{RR}} \mid K = k\right]
=
\mathbb{E}
\left[
\left| \frac{1}{k} - \frac{1}{R} \right|
\;\middle|\; K = k
\right]
\le
\frac{1}{k}
\mathbb{E}\!\left[ |R-k| \mid K = k \right].
\label{eq:rr-bound-step-local-corrected}
\end{equation}

We now compute $\mathbb{E}[|R-k| \mid K = k]$ under the locality-biased
rank perturbation model in Equation~\ref{eq:localityranker}, which defines
\begin{equation}
\Pr(R = r \mid K = k)
=
\frac{\exp(-\beta |r-k|)}{Z_k}.
\label{eq:rankswapmodel}
\end{equation}
Let $D = |R-k|$ denote the absolute rank displacement.
Ignoring boundary effects, which do not affect the asymptotic scaling, the distribution
of $D$ is symmetric and satisfies
\[
\Pr(D = d)
=
\frac{2 e^{-\beta d}}{Z_k},
\quad d \in \mathbb{Z}^+.
\]
The normalization constant can be written as
\[
Z_k
=
2 \sum_{d=1}^{\infty} e^{-\beta d}
=
2 \cdot \frac{e^{-\beta}}{1 - e^{-\beta}}
=
\Theta\!\left( \frac{1}{\beta} \right),
\]
where the final equivalence follows from the Taylor expansion
$1 - e^{-\beta} = \Theta(\beta)$ for $\beta > 0$
(see Appendix~\ref{app:normalization} for details).

The expected displacement is therefore
\[
\mathbb{E}[D]
=
\sum_{d=1}^{\infty} d \Pr(D=d)
=
\frac{2}{Z_k}
\sum_{d=1}^{\infty} d e^{-\beta d}.
\]
Using the standard identity for the derivative of a geometric series, we obtain
\begin{equation}
\sum_{d=1}^{\infty} d e^{-\beta d}
=
\frac{e^{-\beta}}{(1 - e^{-\beta})^2}
=
\Theta\!\left( \frac{1}{\beta^2} \right),
\end{equation}
as shown in Appendix~\ref{app:geom_sum}.
Combining this with $Z_k = \Theta(1/\beta)$ yields
\begin{equation}
\mathbb{E}[|R-k| \mid K = k]
=
\Theta\!\left( \frac{1}{\beta} \right).
\label{eq:expected-rd-localswap}
\end{equation}

Substituting Equation~\ref{eq:expected-rd-localswap} into
Equation~\ref{eq:rr-bound-step-local-corrected} gives
\[
\mathbb{E}\!\left[\Delta_{\mathrm{RR}} \mid K = k\right]
=
\pazocal{O}\!\left( \frac{1}{\beta k} \right),
\]
which completes the proof.
\end{proof}

\section{Risk with Multiple Relevant Documents}
\label{sec:dcg_multi}

We now extend the reranking risk analysis to settings with multiple relevant
documents and consider discounted cumulative gain (DCG) \cite{jarvelin2002cumulated} as the target IR
effectiveness metric.
DCG provides a natural generalization of the single-relevant-document analysis,
as it decomposes additively over individual relevant documents and assigns
rank-dependent importance through a logarithmic discount.

\subsection{Changes in DCG} \label{ss:md-notations}

In Section \ref{ss:setup-rr}, we let the random variable $K$ denote the rank of the only relevant document. In this analysis, $K$ now denotes a set of ranks where relevant documents are retrieved, i.e., $K = \{k_1,\ldots,k_m\}$ ($k_1 < k_2 \ldots < k_m$) such that $\forall k \in K,\,d_k \in \pazocal{R}(Q)$. 
Let
\[
f(x_i) = \frac{1}{\log_2(x_i+1)}
\]
be the discounted cumulative gain (DCG) contribution from the document at the $i$-th rank assuming binary relevance (a simplification step for carrying out the computations conveniently). The total DCG of the list of retrieved documents $L_M(Q)$ is then 
\begin{equation}
\mu(Q; L_M(Q)) = \mathrm{DCG}(Q; L_M(Q))
=
\sum_{i=1}^m f(k_i) \label{eq:dcg}.
\end{equation}

Similar to Section \ref{sec:theory}, we model reranking as a permutation $\sigma \in S_M$ applied to the
ranked list, yielding $L^\sigma_M(Q)$.
The reranking-induced effectiveness change in DCG is given by
\begin{equation}
\Delta_{\mathrm{DCG}}(Q,\sigma)
=
\left|
\mathrm{DCG}(Q; L_M(Q)) - \mathrm{DCG}(Q; L^\sigma_M(Q))
\right|.
\label{eq:del-dcg}
\end{equation}

\subsection{Effect of Rank Perturbations on DCG}

Let $K_i$ denote the random variable corresponding to the rank of the $i$-th relevant
document in the initially retrieved list $L_M(Q)$, with realization $K_i = k_i$, and
let $R_i$ denote the random variable representing its rank after reranking.
The reranking-induced change in DCG for this document, as per the notation of Equation \ref{eq:dcg}, is therefore
\[
\left| f(k_i) - f(r_i) \right|.
\]

To bound this quantity, we exploit the smoothness of the DCG discount function
$f(x) = 1/\log_2(x+1)$.
For $x \ge 1$, $f$ is continuously differentiable and monotonically decreasing.
Applying a first-order Taylor expansion of the DCG discount function
\begin{equation}
f(x) = \frac{1}{\log_2(x+1)},
\label{eq:dcg_discount}
\end{equation}
around $x = k_i$ yields
\begin{equation}
\left| f(R_i) - f(k_i) \right|
\le
\left| f'(k_i) \right| \, |R_i - k_i|
+ \pazocal{O}(|R_i - k_i|^2).
\label{eq:dcg_taylor}
\end{equation}
To compute the derivative appearing in Equation~\ref{eq:dcg_taylor}, we differentiate
the discount function in Equation~\ref{eq:dcg_discount}:
\begin{equation}
f'(x)
=
-\frac{1}{(x+1)\,(\log_2(x+1))^2\,\ln 2}.
\label{eq:dcg_derivative}
\end{equation}
Substituting the derivative into Equation~\ref{eq:dcg_derivative} gives
\begin{equation}
\left| f'(k_i) \right|
=
\frac{1}{(k_i+1)\,(\log_2(k_i+1))^2\,\ln 2}.
\label{eq:dcg_derivative_at_ki}
\end{equation}

For large $k_i$, constant factors and lower-order terms do not affect the dominant
scaling, and Equation~\ref{eq:dcg_derivative_at_ki} therefore simplifies to
\begin{equation}
\left| f'(k_i) \right|
=
\Theta\!\left( \frac{1}{k_i \log^2 k_i} \right).
\label{eq:dcg_derivative_scaling}
\end{equation}
Substituting the scaling behavior of the derivative from
Equation~\ref{eq:dcg_derivative_scaling} into the first-order Taylor bound in
Equation~\ref{eq:dcg_taylor}, we obtain
\begin{equation}
\left| f(R_i) - f(k_i) \right|
\le
\Theta\!\left( \frac{1}{k_i \log^2 k_i} \right)
\, |R_i - k_i|
+
\pazocal{O}(|R_i - k_i|^2).
\label{eq:dcg_taylor_substituted}
\end{equation}

Under the rank perturbation models considered in this work, the expected absolute
rank displacement $\mathbb{E}[|R_i - k_i|]$ is finite, and higher-order terms in
Equation~\ref{eq:dcg_taylor_substituted} contribute only lower-order effects to the
expectation.
Discarding these terms yields the bound
\begin{equation}
\left| f(R_i) - f(k_i) \right|
\le
\Theta\!\left( \frac{1}{k_i \log^2 k_i} \right)
\, |R_i - k_i|.
\label{eq:dcg_taylor_linear_bound}
\end{equation}

Taking expectations with respect to the conditional distribution of $R_i$ conditioned on
$K_i = k_i$ in Equation~\ref{eq:dcg_taylor_linear_bound} yields
\begin{equation}
\mathbb{E}
\left[
\left|
\frac{1}{\log_2(k_i+1)}
-
\frac{1}{\log_2(R_i+1)}
\right|
\right]
\le
\Theta\!\left( \frac{1}{k_i \log^2 k_i} \right)
\mathbb{E}\!\left[ |R_i - k_i| \right],
\label{eq:dcg_taylor_bound}
\end{equation}
thus providing an upper bound on the expected change in the DCG contribution
of the $i$-th relevant document under reranking.

\subsection{Uniform Rank Perturbations} \label{ss:uniform-swap-md}
We first consider uniform rank perturbations, for which we
fix a relevant document with original rank $K_i = k_i$, and let $R_i$ denote its rank after reranking.
Although the post-reranking ranks of different documents are jointly dependent under a permutation, the marginal distribution of $R_i$ is uniform over
$\{1,\ldots,M\} \setminus \{k_i\}$.
That is,
\[
\Pr(R_i = r \mid K_i = k_i) = \frac{1}{M-1}, \quad r \neq k_i.
\]
The expected absolute rank displacement is therefore
\begin{equation}
\mathbb{E}[|R_i - k_i|]
=
\frac{1}{M-1}
\sum_{r \neq k_i} |r - k_i|.    
\label{eq:rank-displacement-md}
\end{equation}
To evaluate this sum, we split it into contributions from ranks below and above
$k_i$:
\[
\sum_{r \neq k_i} |r - k_i|
=
\sum_{r=1}^{k_i-1} (k_i - r)
+
\sum_{r=k_i+1}^{M} (r - k_i).
\]
Each term is a sum of consecutive integers.
Specifically,
\[
\sum_{r=1}^{k_i-1} (k_i - r)
=
\sum_{d=1}^{k_i-1} d
=
\frac{(k_i-1)k_i}{2},
\]
and
\[
\sum_{r=k_i+1}^{M} (r - k_i)
=
\sum_{d=1}^{M-k_i} d
=
\frac{(M-k_i)(M-k_i+1)}{2}.
\]

Substituting these expressions back into the expectation of rank displacements (Equation \ref{eq:rank-displacement-md}) yields
\begin{equation}
\mathbb{E}[|R_i - k_i|]
=
\frac{1}{2(M-1)}
\left(
(k_i-1)k_i
+
(M-k_i)(M-k_i+1)
\right).
\label{eq:md-bound-uniform}
\end{equation}
Clearly, it can seen from Equation \ref{eq:md-bound-uniform} that for any $k_i \in \{1,\ldots,M\}$, the dominant term in the numerator is quadratic
in $M$, while the denominator grows linearly in $M$.
Consequently,
\begin{equation}
\mathbb{E}[|R_i - k_i|]
=
\Theta(M),   \label{eq:bound-md} 
\end{equation}
independent of any specific $k_i$.
Substituting this bound of Equation \ref{eq:bound-md} into Equation~\ref{eq:dcg_taylor_bound} yields
\begin{equation}
\mathbb{E}
\left[
\left|
\frac{1}{\log_2(k_i+1)}
-
\frac{1}{\log_2(R_i+1)}
\right|
\right]
=
\pazocal{O}\!\left( \frac{M}{k_i \log^2 k_i} \right).
\label{eq:big-oh-md}
\end{equation}

Summing these contributions over all relevant documents yields the following result.
\begin{proposition}[Reranking risk for DCG under uniform rank perturbations]
\label{prop:dcg_uniform}
For a ranked list $L_M(Q)$ with relevant document ranks
$K = \{k_1,\ldots,k_m\}$ ($k_1 < k_2 < \ldots < k_m$), the expected reranking-induced effectiveness
change under uniform rank perturbations satisfies
\[
\mathbb{E}\!\left[ \Delta_{\mathrm{DCG}}(Q,\sigma) \right]
=
\pazocal{O}\!\left(
\sum_{i=1}^m \frac{M}{k_i \log^2 k_i}
\right).
\]
\end{proposition}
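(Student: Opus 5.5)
The plan is to reduce the list-level quantity $\Delta_{\mathrm{DCG}}(Q,\sigma)$ to a sum of per-document changes, and then plug in the per-document bound already obtained in Equation~\ref{eq:big-oh-md}. The first step is to write the post-reranking DCG using the ranks $R_i=\sigma(k_i)$ of the same $m$ relevant documents. Since $\sigma$ is a bijection on positions, the relevant documents of $L^\sigma_M(Q)$ are exactly the images of the original ones, so $\mathrm{DCG}(Q;L^\sigma_M(Q))=\sum_{i=1}^m f(R_i)$. We then have
\[
\Delta_{\mathrm{DCG}}(Q,\sigma)=\left|\sum_{i=1}^m \bigl(f(k_i)-f(R_i)\bigr)\right| \le \sum_{i=1}^m \left|f(k_i)-f(R_i)\right|
\]
by the triangle inequality.

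The second step takes expectations under the uniform perturbation model and uses linearity of expectation. This is where the joint dependence between $R_1,\ldots,R_m$ under a single permutation stops mattering: each summand depends only on the \emph{marginal} law of $R_i$. As noted in Section~\ref{ss:uniform-swap-md}, that marginal is uniform on $\{1,\ldots,M\}\setminus\{k_i\}$. We therefore get $\mathbb{E}[\Delta_{\mathrm{DCG}}]\le\sum_i \mathbb{E}\left[|f(k_i)-f(R_i)|\right]$, where each term is exactly the left-hand side of Equation~\ref{eq:big-oh-md}. Applying that bound term by term gives $\sum_i \pazocal{O}\bigl(M/(k_i\log^2 k_i)\bigr)$. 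Because $m\le M$ is finite and the constants hidden in Equation~\ref{eq:dcg_derivative_scaling} are uniform in $k_i$, the $\pazocal{O}$'s can be merged into a single $\pazocal{O}$ of the sum.

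The main obstacle is making the per-document bound of Equation~\ref{eq:big-oh-md} uniform and honest, in two respects. First, the derivation discarded the second-order Taylor term, even though under uniform swaps $|R_i-k_i|$ is $\Theta(M)$, which is not small. Second, $k_i=1$ makes $\log^2 k_i=0$. I would handle both by replacing the Taylor argument with a mean-value bound. Since $|f'|$ is decreasing on $[1,\infty)$, we have $|f(r)-f(k_i)|\le |f'(\min(r,k_i))|\,|r-k_i|$. For $r>k_i$ this gives a bound of $C|r-k_i|/((k_i+1)\log_2^2(k_i+1))$ with an absolute constant $C$, and reading $\log^2 k_i$ as $\log_2^2(k_i+1)$ avoids the degenerate $k_i=1$ case. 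For $r<k_i$ there is an even simpler bound: $|f(r)-f(k_i)|\le f(1)=1$, and these $k_i-1$ terms each carry weight $1/(M-1)$. Their total contribution is at most $(k_i-1)/(M-1)\le 1$.

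I would check that this lower-rank contribution is dominated by $M/(k_i\log^2 k_i)$, up to absolute constants, when $k_i\le M$. This holds whenever $M/(k_i\log^2 k_i)\gtrsim 1$. If it fails for very deep $k_i$, the bound can be stated with an additive $\pazocal{O}(1)$ per document, in the spirit of the paper's asymptotic treatment. Summing then yields the proposition.
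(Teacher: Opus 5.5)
Your reduction is what the paper does implicitly when it ``sums the contributions'': the triangle inequality, then linearity of expectation over the marginals of the $R_i$. You are also right that the paper's Taylor step is not sound here. With $|R_i-k_i|=\Theta(M)$ the discarded quadratic term is not lower order, and the linear bound $|f'(k_i)|\,|R_i-k_i|$ even fails pointwise when $R_i<k_i$ (take $k_i=M$, $R_i=1$). Your mean-value bound correctly handles the side $r>k_i$.

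\textbf{The gap is the side $r<k_i$.} Bounding each term there by $f(1)=1$ gives a contribution of $(k_i-1)/(M-1)$. This is $\Theta(1)$ whenever $k_i=\Theta(M)$, while the target $M/(k_i\log^2 k_i)$ is then $\Theta(1/\log^2 M)\to 0$. So your argument only covers $k_i\log^2 k_i=\pazocal{O}(M)$. Your fallback of an additive $\pazocal{O}(1)$ per document is a strictly weaker statement: for $m$ relevant documents in $[M/2,M]$ it gives $\pazocal{O}(m)$ instead of the claimed $\pazocal{O}(m/\log^2 M)$. As written, the proposal does not prove the proposition.

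The missing estimate is that the lower-side sum is $\pazocal{O}(k/\log^2 k)$, not $\pazocal{O}(k)$. Write $f(r)-f(k)=\int_r^k |f'(x)|\,dx$ and exchange sum and integral:
\[
\sum_{r=1}^{k-1}\bigl(f(r)-f(k)\bigr)=\int_1^k \lfloor x\rfloor\,|f'(x)|\,dx \le \frac{1}{\ln 2}\int_1^k \frac{dx}{\log_2^2(x+1)} = \pazocal{O}\!\left(\frac{k}{\log^2 k}\right).
\]
The inequality uses $x|f'(x)|\le 1/(\ln 2\,\log_2^2(x+1))$. The final order follows by splitting the integral at $\sqrt{k}$: the piece on $[1,\sqrt{k}]$ is at most $\sqrt{k}$, and the piece on $[\sqrt{k},k]$ is at most $4k/\log_2^2 k$.

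Dividing by $M-1$, the lower side contributes $\pazocal{O}\bigl(k_i/(M\log^2 k_i)\bigr)$. This is at most $\pazocal{O}\bigl(M/(k_i\log^2 k_i)\bigr)$ because $k_i\le M$. For $k_i=\Theta(M)$ the two quantities are of the same order, so this side cannot be bounded any more loosely than up to a constant factor. With this estimate, your upper-side mean-value bound, and the summation over $i$, the proof goes through with absolute constants (reading $\log^2 k_i$ as $\log_2^2(k_i+1)$, as you suggest).
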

The bound in Proposition~\ref{prop:dcg_uniform} depends on the individual ranks of
all relevant documents.
While this form is asymptotically tight, it further can be simplified to yield a more interpretable, albeit looser, characterization of reranking risk.

\begin{corollary}[Simplified DCG risk bound under uniform rank perturbations]
\label{cor:dcg_uniform_simple}
Under uniform rank perturbations, the expected reranking-induced change in DCG
satisfies
\[
\mathbb{E}\!\left[ \Delta_{\mathrm{DCG}}(Q,\sigma) \right]
=
\pazocal{O}\!\left(
\frac{m M}{k_1 \log^2 k_1}
\right).
\]
\end{corollary}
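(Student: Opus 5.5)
The plan is to start from Proposition~\ref{prop:dcg_uniform} and bound each summand by the one belonging to the top-ranked relevant document. By Proposition~\ref{prop:dcg_uniform},
\[
\mathbb{E}\!\left[ \Delta_{\mathrm{DCG}}(Q,\sigma) \right]
=
\pazocal{O}\!\left( \sum_{i=1}^m \frac{M}{k_i \log^2 k_i} \right).
\]
So it suffices to show that each term $M/(k_i\log^2 k_i)$ is at most $M/(k_1\log^2 k_1)$, up to a constant that does not depend on $i$, $m$ or $M$. The sum then has at most $m$ terms, each bounded by the first, which gives the factor $m$.

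The key step is monotonicity of $g(x) = x\log^2 x$. Its derivative is $g'(x) = \log^2 x + 2\log x$, which is nonnegative for $x \ge 1$. Hence $g$ is nondecreasing on $[1,\infty)$. Since $k_1 < k_2 < \cdots < k_m$, we get $g(k_i) \ge g(k_1)$, and therefore $1/g(k_i) \le 1/g(k_1)$ for every $i$. Summing over $i$ gives
\[
\sum_{i=1}^m \frac{M}{k_i\log^2 k_i} \le \frac{mM}{k_1\log^2 k_1},
\]
and substituting this into Proposition~\ref{prop:dcg_uniform} yields the corollary. The constant hidden in the $\pazocal{O}$ is uniform in $i$, because Equation~\ref{eq:big-oh-md} was obtained from the same derivative scaling (Equation~\ref{eq:dcg_derivative_scaling}) and the same displacement bound (Equation~\ref{eq:bound-md}) for every document.

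The main obstacle is the boundary case $k_1 = 1$, where $\log^2 k_1 = 0$ and the displayed bound degenerates. This is an artifact of replacing the exact quantity $(k_i+1)\log_2^2(k_i+1)\ln 2$ from Equation~\ref{eq:dcg_derivative_at_ki} by its asymptotic form. I would handle it by running the same argument with $h(x) = (x+1)\log_2^2(x+1)$, which is positive and increasing for $x \ge 1$. This gives the bound $\pazocal{O}\bigl(mM/((k_1+1)\log^2(k_1+1))\bigr)$. For $k_1 \ge 2$ this is equivalent to the stated form up to constants, since $(x+1)\log^2(x+1)/(x\log^2 x)$ is bounded there. The case $k_1 = 1$ is then read with this shifted convention, as the paper implicitly does in Equation~\ref{eq:dcg_derivative_scaling}.

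Finally, I would note that the corollary is a deliberately looser bound. It is tight only when the relevant documents are clustered near $k_1$, and in general it overcounts the contributions from deeper relevant documents.
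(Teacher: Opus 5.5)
Your proposal is correct and follows the paper's own proof: you bound each summand of Proposition~\ref{prop:dcg_uniform} by the $k_1$ term using monotonicity of $x\log^2 x$, and summing the $m$ terms gives the factor $m$. Your treatment of the degenerate case $k_1=1$ with the shifted function $(x+1)\log_2^2(x+1)$ is a sensible refinement that the paper omits; it simply states monotonicity for $x \ge 2$.
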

\begin{proof}
Note that $k_1 = \min_i k_i$ denotes the highest-ranked relevant document in $L_M(Q)$.
From Proposition~\ref{prop:dcg_uniform}, we have
\[
\mathbb{E}\!\left[ \Delta_{\mathrm{DCG}}(Q,\sigma) \right]
=
\pazocal{O}\!\left(
\sum_{i=1}^m \frac{M}{k_i \log^2 k_i}
\right).
\]
Since the function $x \log^2 x$ is monotonically increasing for $x \ge 2$, it follows
that $k_i \log^2 k_i \ge k_1 \log^2 k_1$ for all $i$.
Substituting this bound yields
\[
\sum_{i=1}^m \frac{M}{k_i \log^2 k_i}
\le
\frac{m M}{k_1 \log^2 k_1},
\]
which proves the claim.
\end{proof}
Corollary~\ref{cor:dcg_uniform_simple} shows that, under uniform rank perturbations,
reranking risk is dominated by the position of the highest-ranked relevant document
and grows linearly with both the ranked list depth and the number of relevant
documents.

\subsection{Locality-Biased Rank Perturbations}
\label{ss:locality-swap-md}

We next consider locality-biased rank perturbations, in which the probability of
displacing a document decays exponentially with the magnitude of the displacement.
As discussed in Section~\ref{ss:localrankswapmodel-sd}, such perturbations represent a
conservative reranking behaviour, where documents are more likely to move only by a small
number of rank positions rather than undergo large displacements.
Consequently, reranking-induced effectiveness changes are expected to be smaller
than under uniform rank perturbations analyzed in
Section~\ref{ss:uniform-swap-md}.

As in the uniform case, we analyze reranking risk by substituting an appropriate
expression for the expected absolute rank displacement
$\mathbb{E}[|R_i - k_i|]$ into the Taylor-based bound of
Equation~\ref{eq:dcg_taylor_bound}.
We therefore begin by deriving this expectation under the locality-biased rank
perturbation model.

Fix a relevant document with original rank $K_i = k_i$, and let $R_i$ denote its rank after reranking. Under the locality-biased model (Equation~\ref{eq:rankswapmodel}), the marginal
conditional distribution of $R_i$ given $K_i = k_i$ is
\begin{equation}
\Pr(R_i = r \mid K_i = k_i)
=
\frac{\exp(-\beta |r-k_i|)}{Z_{k_i}},
\quad r \neq k_i,
\label{eq:md-locality-ranker}
\end{equation}
where the normalization constant is
\[
Z_{k_i} = \sum_{r \neq k_i} \exp(-\beta |r-k_i|).
\]

In Equation \ref{eq:expected-rd-localswap} of Section \ref{ss:localrankswapmodel-sd} we already derived that the expected value of the absolute differences between these random variables is 
\begin{equation}
\mathbb{E}[|R_i - k_i|]
=
\Theta\!\left( \frac{1}{\beta} \right),
\label{eq:beta-intuition}
\end{equation}
for any fixed $\beta > 0$, reflecting the concentration of probability mass near the original rank as locality increases.

Substituting this bound into the Taylor series based inequality of
Equation~\ref{eq:dcg_taylor_bound} yields
\begin{equation}
\mathbb{E}
\left[
\left|
\frac{1}{\log_2(k_i+1)}
-
\frac{1}{\log_2(R_i+1)}
\right|
\right]
=
\pazocal{O}\!\left(
\frac{1}{\beta\, k_i \log^2 k_i}
\right).
\label{eq:dcg_local_single}
\end{equation}

Summing these contributions over all recall points in $L_M(Q)$ yields the following result.
\begin{proposition}[Reranking risk for DCG under locality-biased rank perturbations]
\label{prop:dcg_local}
For a ranked list $L_M(Q)$ with $m$ ($m < M$) recall points
$K = \{k_1,\ldots,k_m\}$, the expected reranking-induced effectiveness
change under locality-biased rank perturbations satisfies
\begin{equation}
\mathbb{E}\!\left[ \Delta_{\mathrm{DCG}}(Q,\sigma) \right]
=
\pazocal{O}\!\left(
\sum_{i=1}^m \frac{1}{\beta\, k_i \log^2 k_i}
\right).
\label{eq:final-bound}
\end{equation}
\end{proposition}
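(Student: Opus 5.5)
The proof mirrors Proposition~\ref{prop:dcg_uniform}, with only the displacement estimate changed. The first step is to reduce the list-level quantity to per-document terms. By Equation~\ref{eq:dcg}, DCG depends only on the post-reranking positions of the $m$ relevant documents, so
\[
\Delta_{\mathrm{DCG}}(Q,\sigma)=\Bigl|\sum_{i=1}^m \bigl(f(k_i)-f(R_i)\bigr)\Bigr|\le \sum_{i=1}^m \bigl|f(k_i)-f(R_i)\bigr|
\]
by the triangle inequality. Linearity of expectation then gives $\mathbb{E}[\Delta_{\mathrm{DCG}}]\le\sum_i \mathbb{E}\bigl[|f(k_i)-f(R_i)|\mid K_i=k_i\bigr]$. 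This step needs only the marginal distribution of each $R_i$, given by Equation~\ref{eq:md-locality-ranker}. The dependence among the $R_i$ induced by $\sigma$ being a permutation is therefore irrelevant, which is also how the uniform case was handled.

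The second step bounds each summand using the Taylor-based inequality of Equation~\ref{eq:dcg_taylor_bound} together with the displacement estimate $\mathbb{E}[|R_i-k_i|]=\Theta(1/\beta)$ from Equation~\ref{eq:beta-intuition}. That estimate was established in the proof of Proposition~\ref{prop:sd-local} via the geometric-series sums $Z_k=\Theta(1/\beta)$ and $\sum_d d e^{-\beta d}=\Theta(1/\beta^2)$. The combination is Equation~\ref{eq:dcg_local_single}, giving a per-document bound of $\pazocal{O}\bigl(1/(\beta k_i\log^2 k_i)\bigr)$. Summing over $i=1,\ldots,m$ yields Equation~\ref{eq:final-bound}. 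The condition $m<M$ only guarantees that the list contains non-relevant documents, so that swaps can change DCG; it is not otherwise used.

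The main obstacle is justifying that the second-order term $\pazocal{O}(|R_i-k_i|^2)$ in Equation~\ref{eq:dcg_taylor} can be dropped. My first attempt would avoid Taylor expansion entirely and use the mean value theorem instead. Since $|f'|$ is decreasing, $|f(R_i)-f(k_i)|\le |f'(\xi)|\,|R_i-k_i|$ for some $\xi$ between $k_i$ and $R_i$. I would then split the expectation into two events.
\begin{itemize}
\item On $\{|R_i-k_i|\le k_i/2\}$, we have $\xi\ge k_i/2$. This gives $|f'(\xi)|=\pazocal{O}\bigl(1/(k_i\log^2 k_i)\bigr)$, so this part contributes $\pazocal{O}\bigl(\mathbb{E}|R_i-k_i|/(k_i\log^2 k_i)\bigr)$.
\item On the complement, $|f(R_i)-f(k_i)|\le f(1)=1$. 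The probability of this event is $\pazocal{O}(e^{-\beta k_i/2})$ by the exponential tail of the discrete Laplace kernel, which decays faster than $1/(\beta k_i\log^2 k_i)$ for fixed $\beta$ as $k_i$ grows.
\end{itemize}
Boundary effects near rank $1$ or $M$ only shrink $Z_{k_i}$ by a constant factor, since at least one side contributes $e^{-\beta}$. So $\mathbb{E}[|R_i-k_i|]=\pazocal{O}(1/\beta)$ survives, and the bound follows for $k_i\ge 2$.

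One caveat must be treated as a convention. At $k_i=1$ the expression $\log^2 k_i$ vanishes, so there $k_i\log^2 k_i$ should be read as $(k_i+1)\log_2^2(k_i+1)$, consistent with Equation~\ref{eq:dcg_derivative_at_ki}. With that reading the bound holds for every recall point.
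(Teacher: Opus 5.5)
Your proposal follows the paper's argument. It uses a per-document first-order bound with $|f'(k_i)|=\Theta(1/(k_i\log^2 k_i))$, the $\Theta(1/\beta)$ expected displacement from the proof of Proposition~\ref{prop:sd-local}, and summation over the recall points; your triangle-inequality step is what the paper's ``summing these contributions'' does implicitly. Where you depart from the paper, you improve on it: the paper simply discards the $\pazocal{O}(|R_i-k_i|^2)$ Taylor remainder, which naively contributes $\pazocal{O}(1/\beta^2)$ in expectation, whereas your mean-value-theorem split justifies the linear bound and you also handle the $k_i=1$ degeneracy. Note only that bounding the tail event by $|f(R_i)-f(k_i)|\le 1$ gives a $\beta$-dependent constant. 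So, consistent with the paper's ``for any fixed $\beta>0$'', your argument does not show the $1/\beta$ factor holds uniformly as $\beta\to 0$.
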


\begin{corollary}[Simplified DCG risk bound under locality-biased rank perturbations]
\label{cor:dcg_local_simple}
Again similar to Corollary \ref{cor:dcg_uniform_simple}, as $k_1 = \min_i k_i$ denotes the highest-ranked relevant document, using the same argument on the monotonically increasing property of $x \log^2 x$, 
a simpler (but less strict) bound on the
expected reranking-induced DCG change can thus be written as
\[
\mathbb{E}\!\left[ \Delta_{\mathrm{DCG}}(Q,\sigma) \right]
=
\pazocal{O}\!\left(
\frac{m}{\beta\, k_1 \log^2 k_1}
\right).
\]
\end{corollary}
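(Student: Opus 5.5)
The plan is to mirror the proof of Corollary~\ref{cor:dcg_uniform_simple} almost verbatim, starting from Proposition~\ref{prop:dcg_local} in place of Proposition~\ref{prop:dcg_uniform}. By Proposition~\ref{prop:dcg_local} we already have
\[
\mathbb{E}\!\left[ \Delta_{\mathrm{DCG}}(Q,\sigma) \right]
=
\pazocal{O}\!\left(
\sum_{i=1}^m \frac{1}{\beta\, k_i \log^2 k_i}
\right),
\]
and since $\beta>0$ is a fixed parameter of the kernel in Equation~\ref{eq:md-locality-ranker}, independent of the index $i$, the factor $1/\beta$ can be pulled out of the sum. All that remains is to bound each summand uniformly in $i$ by the summand for the top-ranked recall point $k_1$.

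The key step is the monotonicity of $g(x)=x\log^2 x$. We will check it by differentiation, $g'(x)=\log^2 x + 2\log x/\ln b$ for logarithm base $b$. This is nonnegative for $x\ge 1$ and strictly positive for $x>1$, so $g$ is increasing on $[1,\infty)$, and in particular on $x\ge 2$ as in Corollary~\ref{cor:dcg_uniform_simple}. Since the recall points are ordered, $k_1<k_2<\dots<k_m$, we get $k_i\log^2 k_i \ge k_1\log^2 k_1$ for every $i$. Consequently
\[
\sum_{i=1}^m \frac{1}{\beta\, k_i \log^2 k_i}
\le
\frac{1}{\beta}\sum_{i=1}^m \frac{1}{k_1 \log^2 k_1}
=
\frac{m}{\beta\, k_1 \log^2 k_1}.
\]
Transitivity of $\pazocal{O}(\cdot)$ under pointwise upper bounds then gives the claimed simplified bound.

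The only real obstacle is the boundary case $k_1=1$, where $\log^2 k_1=0$ and the expression degenerates. This is inherited from Proposition~\ref{prop:dcg_local} itself, which replaced $(k_i+1)\log_2^2(k_i+1)$ by $k_i\log^2 k_i$ in the asymptotic step of Equation~\ref{eq:dcg_derivative_scaling}. We handle it by reading $k\log^2 k$ throughout as the asymptotically equivalent quantity $(k+1)\log_2^2(k+1)$, which is strictly positive and increasing for all $k\ge 1$; the same monotonicity argument then applies unchanged. Equivalently, we state the corollary for $k_1\ge 2$, matching the hypothesis used in Corollary~\ref{cor:dcg_uniform_simple}.
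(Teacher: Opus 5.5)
Your proposal is correct and takes the same route as the paper. Starting from Proposition~\ref{prop:dcg_local}, you use that $x\log^2 x$ is increasing and that $k_1=\min_i k_i$ to bound every summand by the $k_1$ term, which gives $m/(\beta\, k_1\log^2 k_1)$; your explicit derivative check and your handling of the degenerate case $k_1=1$ (where $\log^2 k_1=0$) add care the paper leaves implicit by restricting attention to $x\ge 2$.
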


An interesting observation is that the uniform and locality-biased DCG bounds (Propositions \ref{prop:dcg_uniform} and \ref{prop:dcg_local}) differ
only in that the ranked list depth $M$ is replaced with the inverse locality parameter
$1/\beta$, i.e., setting $\beta = 1/M$ in Corollary~\ref{cor:dcg_local_simple} leads to the uniform bound of Corollary~\ref{cor:dcg_uniform_simple}.
From this perspective, the locality parameter $\beta$ can be interpreted as
controlling the scale of rank uncertainty.

\section{Experiment Setup}
\label{sec:experimental-setup}


In this section, we describe the experiment setup to empirically investigate the extent to which the behaviour of
stochastic reranking methods in practice conforms to the theoretical predictions
developed in Sections \ref{sec:theory} and \ref{sec:dcg_multi}.

While the theory provides insights into how reranking risk scales under idealised
stochastic perturbations with a closed-form underlying probability distribution for the ranking policy.
However, real-world stochastic rankers involve multiple sources of randomness, such as
$\epsilon$-decay or exploration--exploitation mechanisms
\cite{Feng2022HasCG,Yadav2019PolicyGradientTO,Oh2023ApplyingRL} or information from sensitive attributes (e.g., race or gender) from document metadata  \cite{sen,jaenich2022university}, the probability distributions of which are difficult to express as analytic functions.
The research question is therefore to 
assess whether the observed effectiveness changes in real-life stochastic rankers follow the trends and scaling behaviour anticipated by the theoretical analysis.


\subsection{Dataset and Stochastic Ranked Lists} \label{ss:dataset}
We conduct our experiments on the `Task 2' (Wikipedia Editors) of the TREC Fair ranking track \cite{DBLP:conf/trec/EkstrandMR022}, which involves retrieving 100 ranked lists for each query with an objective to optimize fairness across gender and geographic location. 

For our experiments, we leverage the three best performing runs of the TREC Fair Ranking task \cite{DBLP:conf/trec/EkstrandMR022} -- all employing a multi-armed bandit (MAB) based optimization policy.
Each run in the TREC Fairness 2022 task constitutes a total of 100 submitted result lists, each with a sequence number (representing time). The first list is obtained by a deterministic model (specifically, ColBERT \cite{colbert} for the runs that we experiment with \cite{DBLP:conf/trec/JanichMO22}). Each subsequent list reflects a stochastic re-ranking of the initially retrieved list of candidate documents.
More details about the runs used in our experiments are as follows.
\uls
\item \textbf{MAB} (official submission name ``UoGTrMabSAED''):
A stochastic MAB-based approach that incorporates randomization in the bandit component but does not use sensitive-attribute-based weighting. It uses an $\epsilon$-greedy strategy of exploration-exploitation trade-off with a fixed value of $\epsilon$.

\item \textbf{MAB-ED} (official submission name ``UoGTrMabSaWR''): Similar to \textbf{MAB}, the only difference is that it has a decaying value of $\epsilon$, i.e., the ranking policy becomes less stochastic with time (iteration number).

\item \textbf{MAB-SA} (official submission name ``UoGTrMabWeSA''): Similar to MAB, but this method also uses the fairness weights derived from sensitive attribute values to sample a rank ordering.  
\ule

\subsection{Prediction Task and Evaluation Measure}
Let us denote a ranking policy (i.e., one of the three approaches listed in Section \ref{ss:dataset}) by $\pi$.
Each submitted run in the TREC fairness task for a query $Q$ is a sequence of ranked lists (each of size $M=20$) of the form:
$$\pazocal{S}(Q,\pi) = \langle L^{(0)}(Q,\pi),\, L^{(1)}(Q,\pi),\, \ldots\, L^{(n-1)}(Q,\pi)\rangle$$ where $n=100$. To avoid clutter we omit the suffix $M$ denoting the size of a ranked list, as per the notations developed in Equation \ref{eq:delta_mu_def}. Each $L^{(i)}(Q,\pi) \sim \pi(Q, L^{(0)}(Q))$ for $i=1,\ldots,n-1$, i.e., the first ranked list in the sequence is the initially retrieved list while the other ones are sampled from the stochastic policy $\pi$. 



As a next step, we compute the actual (ground-truth) DCG change induced by the $i$-th sample by substituting: $L^\sigma_M(Q) \gets L^{(i)}(Q,\pi)$ in
Equation \ref{eq:del-dcg} for each $i=1,\ldots,n-1$, which for convenience we denote as $\Delta_{\mathrm{DCG}}(Q, \pi, i)$. This step uses the TREC Fairness track 2022 relevance assessments.

Each query-ranker pair thus leads to an observation in the actual DCG change of the form $\Delta_{\mathrm{DCG}}(Q, \pi, i)$. The objective is now to predict an upper bound of this reranking risk $\hat{\Delta}_{\mathrm{DCG}}(Q, \pi, i)$
as a function of $\sum_{j=1}^m 1/(\beta\, k_j \log^2 k_j)$, i.e., by applying Equation \ref{eq:final-bound}.


The bound of Equation~\ref{eq:final-bound} depends on a ranker-specific locality parameter $\beta$, which controls the decay rate of rank displacement probabilities under the stochastic perturbation model in Equation~\ref{eq:md-locality-ranker}. For each stochastic ranker $\pi$, we estimate $\beta$ empirically from observed rank displacements of the first relevant document. A natural deployment-oriented setting is one in which only an initial fraction of stochastic ranking realisations is available for estimating $\beta$, after which the estimated value is used to predict the risk associated with subsequent, unobserved realisations.
Formally,
\begin{equation}
\begin{split}
\pazocal{T}(Q,\pi) &= \bigcup_{i=1}^{\tau \cdot n} \{L^{(i)}(Q,\pi)\}, \\
\beta(Q, \pi) &= \phi(\pazocal{T}(Q,\pi)),
\end{split}
\end{equation}
where $\phi$ denotes the estimation procedure for $\beta$ from the training set $\pazocal{T}(Q,\pi)$ of observed rank samples for query $Q$, and
$\tau \in [0,1]$ is the proportion of rank samples assumed to be observed.

For our experiments, we set $\tau=0.1$, corresponding to 10 (out of 100) observed rank samples per query-ranker pair.
A small value of $\tau=0.1$ reflects the intended use of the proposed framework as an early-stage risk
estimation procedure where the locality parameter is inferred from limited evidence, and then used to estimate risk for subsequent ranking outputs.
Although larger values of $\tau$ would likely produce more stable estimates of $\beta$, they would also make the experiment closer to retrospective analysis than to ex ante risk prediction.


Recall from Equation \ref{eq:beta-intuition} that the expected value of the rank shifts is $\Theta(1/\beta)$, which means that the average shifts of the ranks of the relevant documents in the observed stochastic lists relative to the initially retrieved one can be used to estimate $\beta$. More formally,
\begin{equation}
\begin{split}
\bar{D}(\pazocal{T}(Q,\pi)) &= \sum_{L \in \pazocal{T}(Q,\pi)} \sum_{D \in \pazocal{R}(Q)} \left| \rho(D,L) - \rho(D,L^{(0)}(Q,\pi)) \right|, \\    
\beta(Q, \pi) & = \frac{\alpha}{\bar{D}(\pazocal{T}(Q,\pi))},
\label{eq:beta-estimate}
\end{split}
\end{equation}
where $\rho(D,L)$ denotes the rank of a document $D$ in a list $L$, and $\alpha \in \mathbb{R}$ is a constant for the asymptotic scaling. In our experiments, we found that a value of $\alpha \in [0.2, 0.5]$ works well in practice.

Using the theoretical bound from Equation \ref{eq:final-bound}, for each unobserved ranked list of each query, i.e., for each $L^{(j)}(Q,\pi) \in \pazocal{S}(Q,\pi) - \pazocal{T}(Q,\pi)$, we predict
\begin{equation}
\hat{\Delta}_{\mathrm{DCG}}(Q, \pi, j) = \frac{1}{\beta(Q,\pi)} \sum_{D \in \pazocal{R}(Q)} \frac{1}{\rho(D,L) \log^2(\rho(D,L))},
\end{equation}
where we use the value of $\beta$ from Equation \ref{eq:beta-estimate} estimated on the training set of observed rank permutations for a particular policy for a particular query.
Code for the reported experiments is available at \url{https://github.com/gdebasis/srea}.

\begin{figure*}[t]
  \centering
  \begin{subfigure}[t]{0.26\textwidth}
    \includegraphics[width=\linewidth]{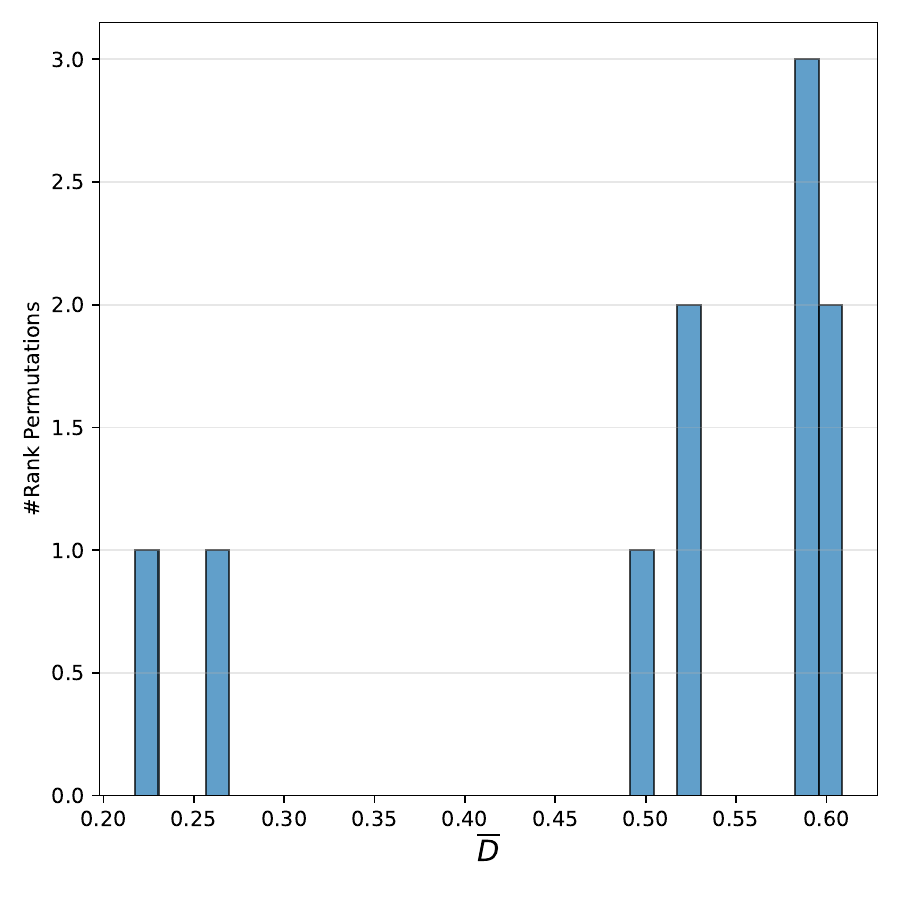}
    \caption{Mean displacement $\bar{D}$}
  \end{subfigure}
  \begin{subfigure}[t]{0.34\textwidth}
    \includegraphics[width=\linewidth]{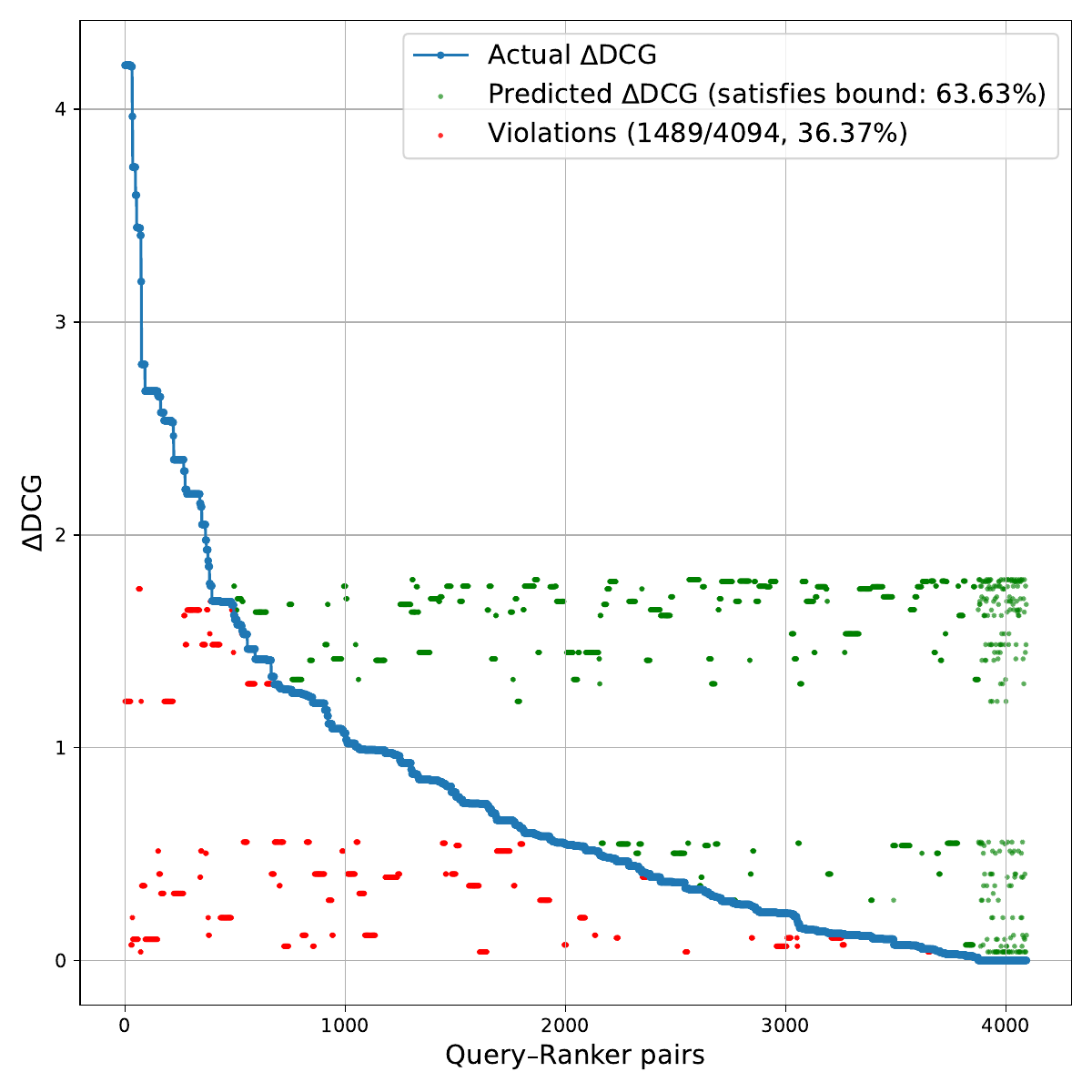}
    \caption{$\alpha=0.5$}
  \end{subfigure}
  \begin{subfigure}[t]{0.34\textwidth}
    \includegraphics[width=\linewidth]{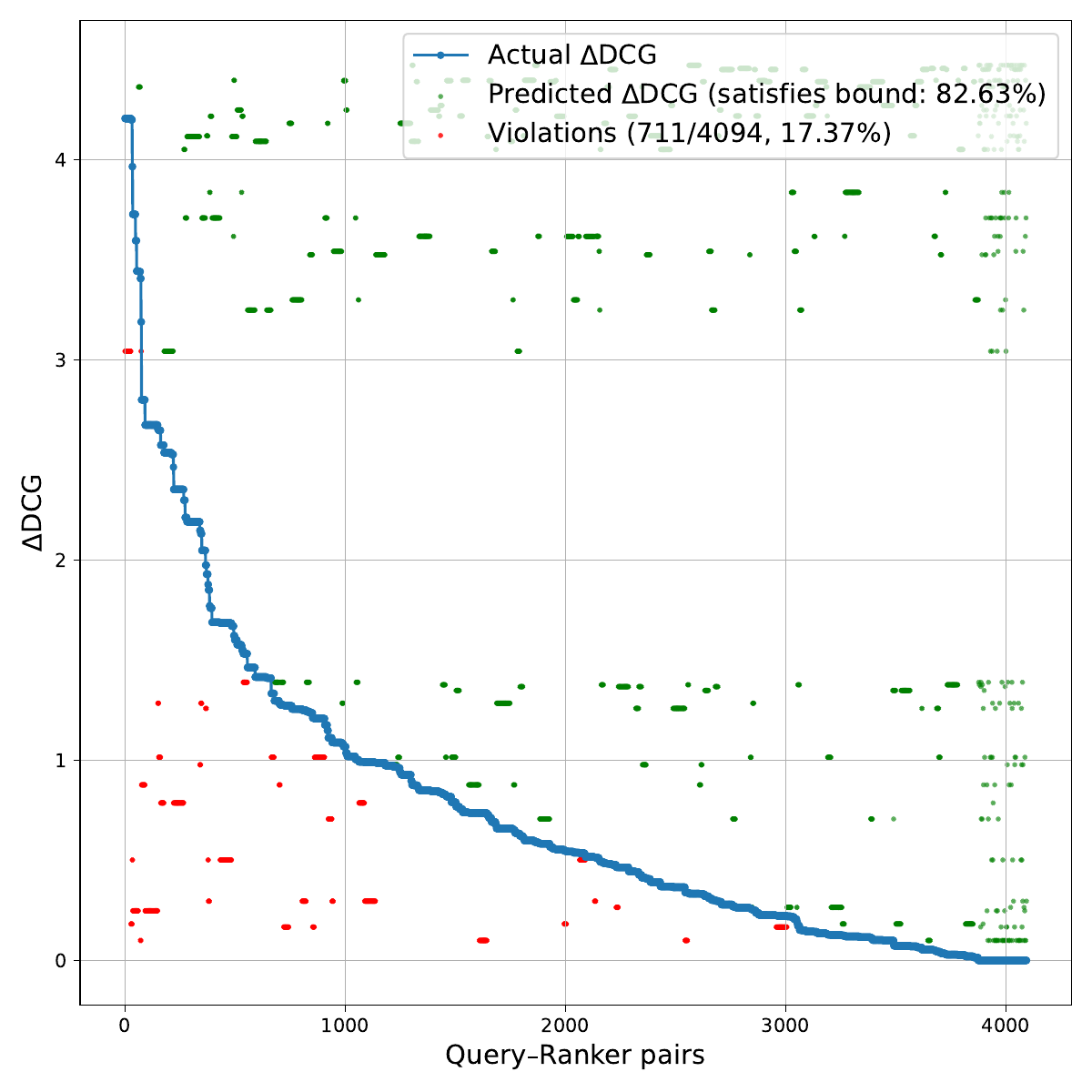}
    \caption{$\alpha=0.2$}
  \end{subfigure}

  \caption{%
  The left-most plot shows the distribution of rank displacements for 10 permutation samples (considered as observed) averaged over 48 queries of TREC Fairness 2022 dataset. The rest of the plots show the accuracy of the bounds, as predicted by Equation \ref{eq:final-bound}, for unobserved permutations over each query ($48\times(99-10)$ such pairs) for the ranking policy: MAB (Multi-Arm Bandits with fixed stochasticity with no weights from sensitive attributes) for two different $\alpha$ values.
  }
  \label{fig:mab-sawr}
\end{figure*}
\begin{figure*}[t]
  \centering
  \begin{subfigure}[t]{0.26\textwidth}
    \centering
    \includegraphics[width=\linewidth]{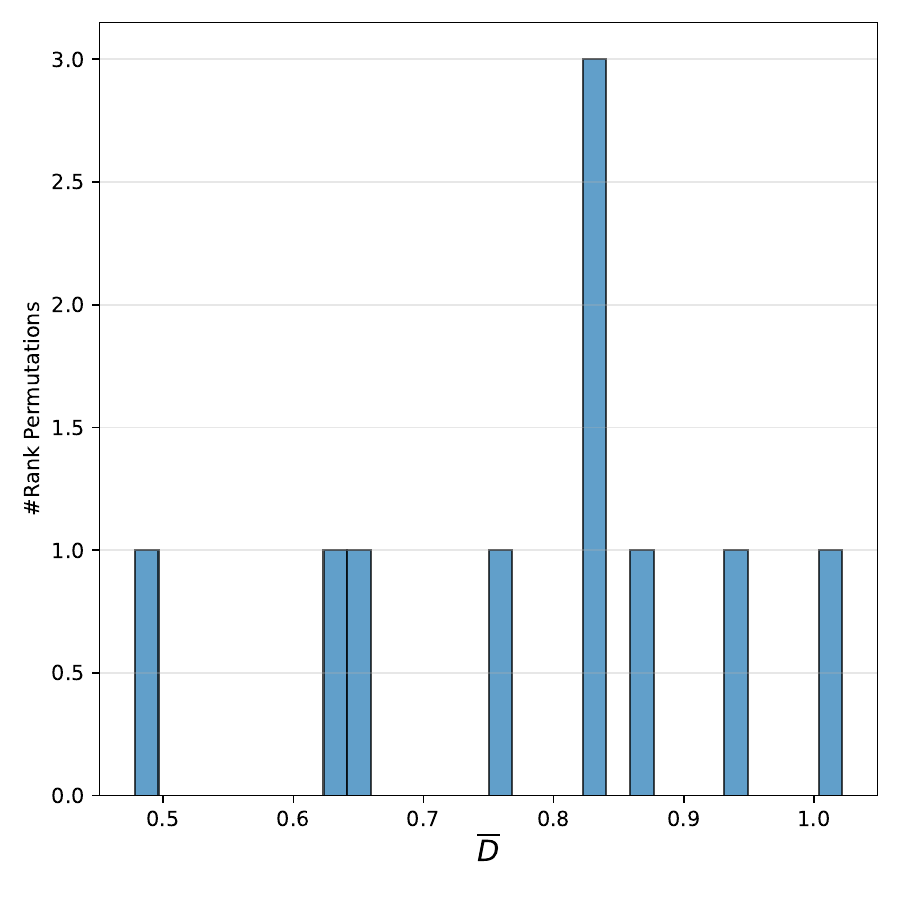}
    \caption{Mean displacement $\bar{D}$}
  \end{subfigure}
  \begin{subfigure}[t]{0.34\textwidth}
    \includegraphics[width=\linewidth]{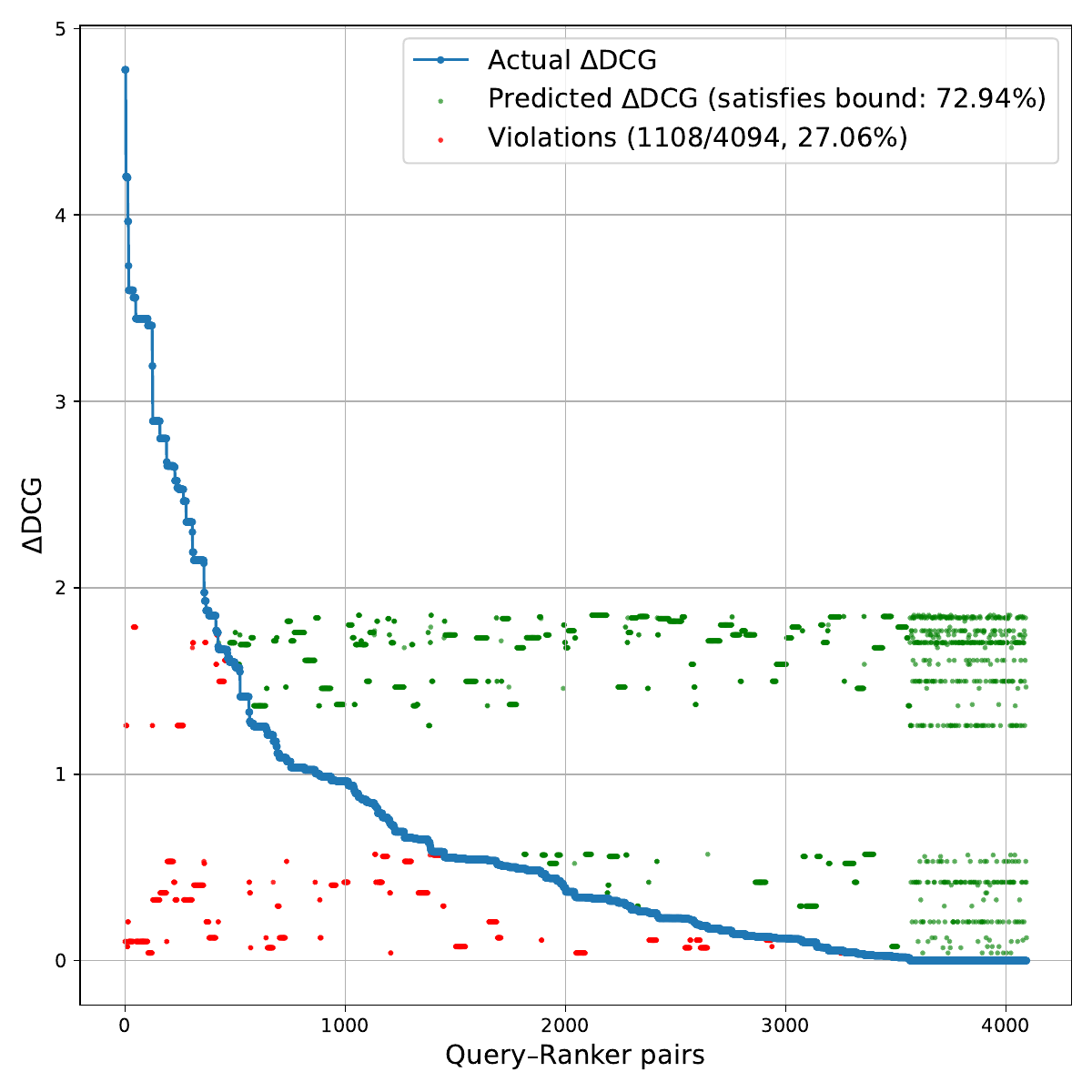}
    \caption{$\alpha=0.5$}
  \end{subfigure}
  \begin{subfigure}[t]{0.34\textwidth}
    \includegraphics[width=\linewidth]{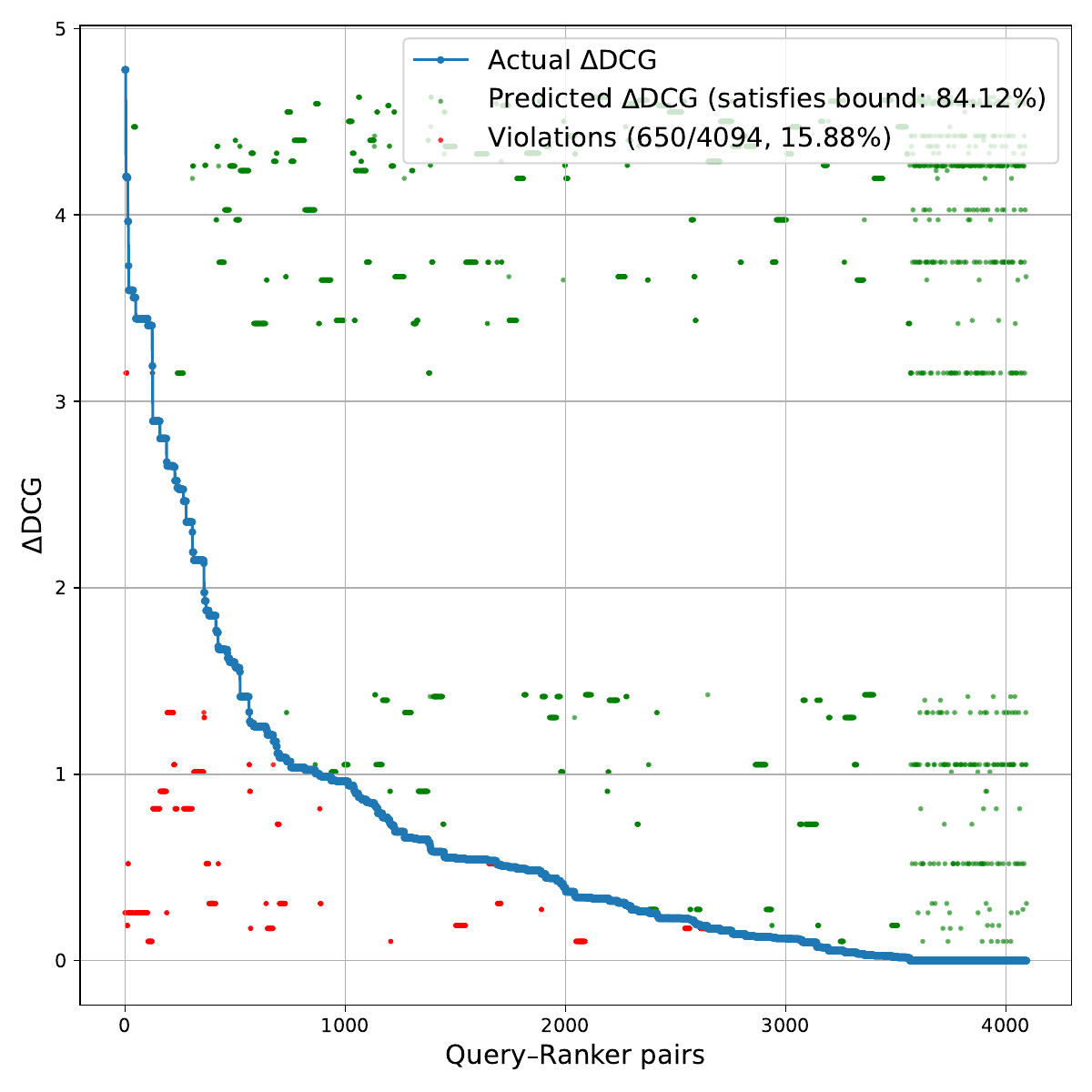}
    \caption{$\alpha=0.2$}
  \end{subfigure}
  \caption{%
  Results with an identical experiment setup to that of Figure \ref{fig:mab-sawr} reported for
  the ranking policy: MAB-ED (Multi-Arm Bandit with an $\epsilon$-greedy exploration strategy with a decaying $\epsilon$ without any sensitive attribute specific weights).
  }
  \label{fig:mab-saed}
\end{figure*}
\begin{figure*}[t]
  \centering
  \begin{subfigure}[t]{0.26\textwidth}
    \includegraphics[width=\linewidth]{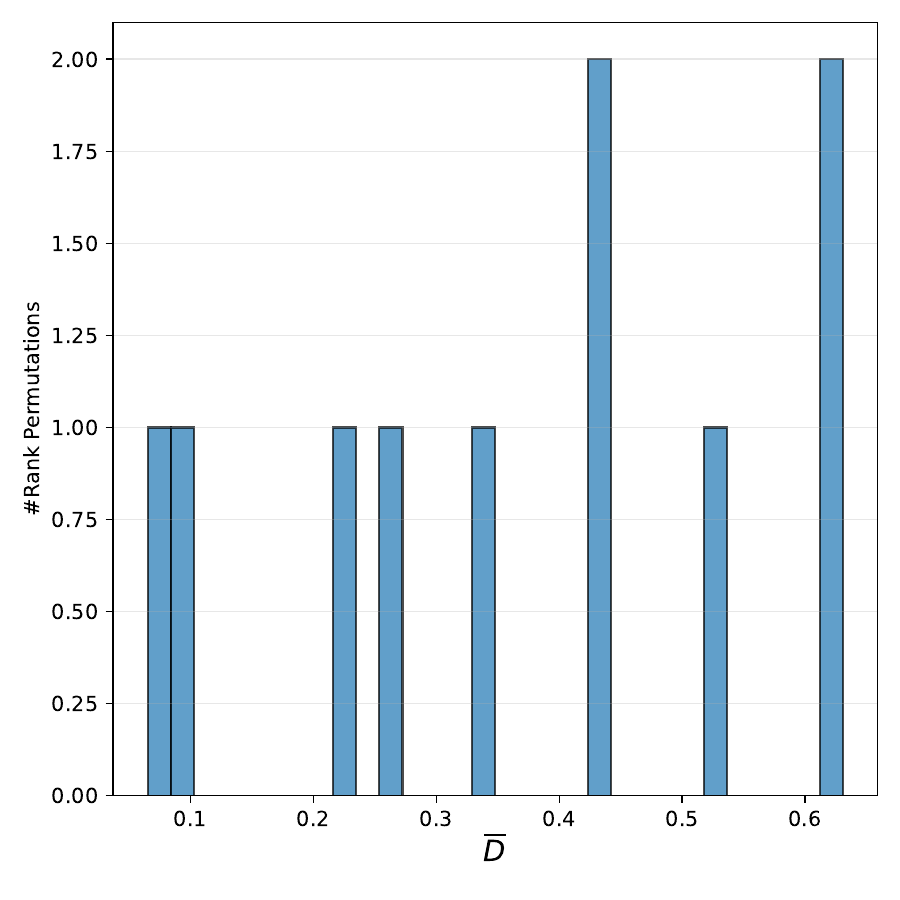}
    \caption{Mean displacement $\bar{D}$}
  \end{subfigure}
  \begin{subfigure}[t]{0.34\textwidth}
    \includegraphics[width=\linewidth]{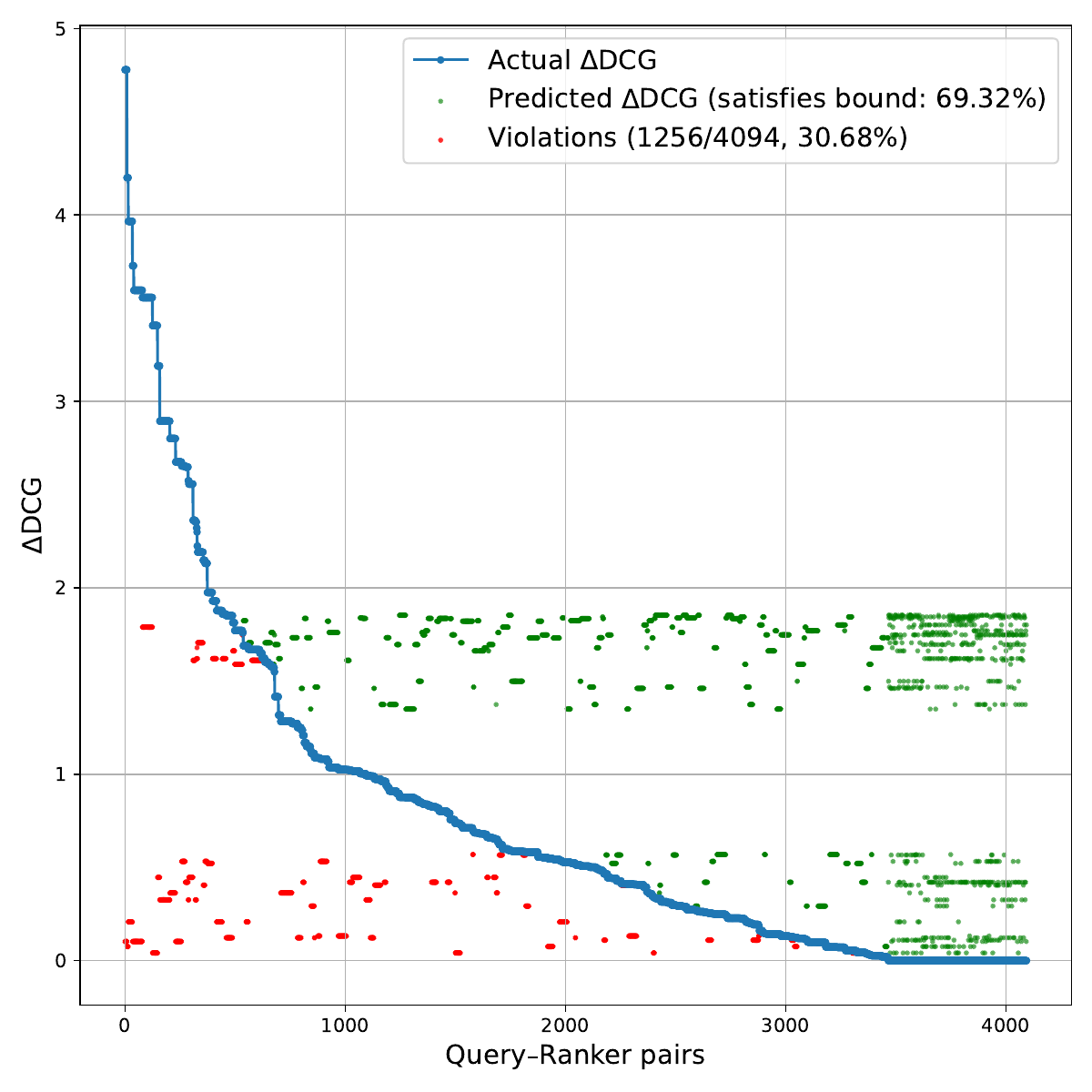}
    \caption{$\alpha=0.5$}
  \end{subfigure}
  \begin{subfigure}[t]{0.34\textwidth}
    \includegraphics[width=\linewidth]{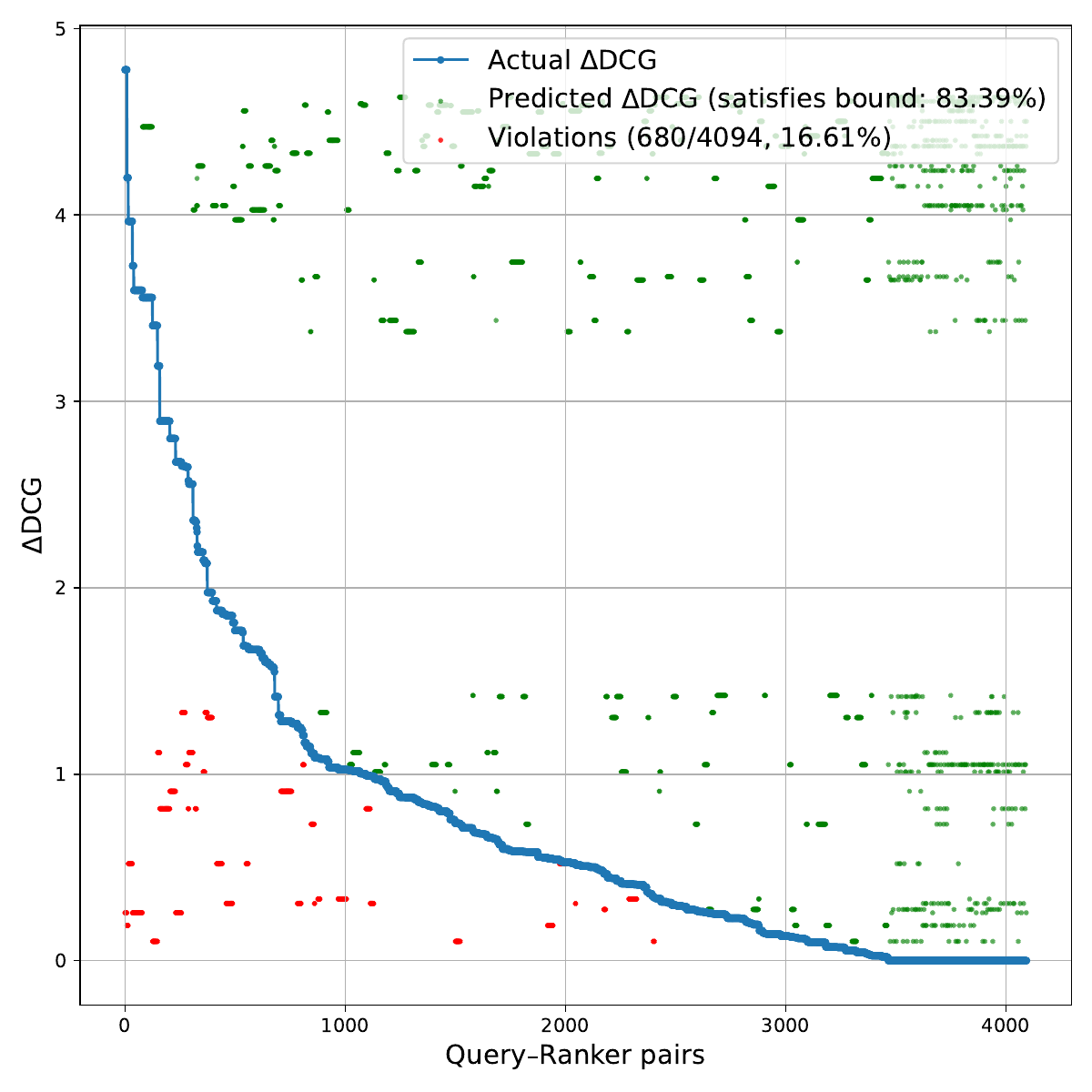}
    \caption{$\alpha=0.2$}
  \end{subfigure}

  \caption{%
Results with an identical experiment setup to that of Figure \ref{fig:mab-sawr} reported for
  the ranking policy:   
  MAB-SA (Multi-Arm Bandit with $\epsilon$-greedy exploration strategy together with fairness weights derived from sensitive attribute values).
  }
  \label{fig:mab-wesa}
\end{figure*}

\section{Results}
\label{sec:results}

As shown in Figures~\ref{fig:mab-sawr}--\ref{fig:mab-wesa},
the different MAB-based stochastic ranking policies induce distinct
distributions of mean rank displacements, reflecting varying degrees of
locality bias arising from their respective exploration strategies.
These displacement patterns provide a first indication of how closely
each policy aligns with the locality assumptions underlying the
theoretical analysis.

Figures~\ref{fig:mab-sawr}--\ref{fig:mab-wesa} also present a pointwise
comparison between the empirically observed DCG changes and the
theoretically predicted DCG sensitivities at the level of individual
query--ranker realizations.
Within each figure, realizations are sorted in decreasing order of
observed DCG change.
Green points indicate cases in which the theoretical prediction
dominates the observed change, while red points denote realizations
that violate the bound.

We observe that the theoretical bound is not satisfied pointwise for all
realizations.
This is not unexpected, since practical ranking policies only
approximately behave like the idealized locality-biased rank
perturbation model assumed in Equation~\ref{eq:md-locality-ranker}.
Moreover, the parameters of the approximated distribution are only
estimated from a finite number of rank samples.
Recall that in our experimental setting, these parameters are estimated from only
10 samples per query and ranking policy, which introduces additional
sampling variability, particularly for realizations involving large
rank displacements.

Among the evaluated methods, the $\epsilon$-decay based policy
(MAB-ED; Figure~\ref{fig:mab-saed}) exhibits the strongest agreement
with the theoretical risk estimates.
A likely explanation is that this policy becomes progressively less
stochastic over time, converging toward a largely deterministic ranking
driven by query--document relevance signals.
As a result, its induced rank perturbations are more tightly
concentrated and more closely matched by the locality-based model.

The policy incorporating sensitive attribute information (MAB-SA) also
demonstrates consistent alignment with the predicted sensitivities,
despite its reliance on proportional exposure constraints over multiple
attribute combinations.
This suggests that the theoretical framework remains informative even
in the presence of complex, attribute-aware randomization mechanisms.

A consistent pattern across all settings is that bound violations occur
predominantly in the \emph{head} of the distribution, corresponding to
realizations with large observed DCG changes.
These cases typically arise when early relevant documents undergo substantial rank displacement, where small estimation errors in the perturbation parameters may translate into larger prediction
inaccuracies.

In contrast, the \emph{tail} of the distribution---constituting the vast
majority of query-ranker realizations with small DCG changes---is largely
well covered by the theoretical predictions.
Across all values of $\alpha$, the predicted sensitivities dominate the
observed DCG changes for most low-variation realizations, indicating
that the theory accurately captures typical, rather than extreme,
reranking behaviour.

Finally, the left plots of Figures~\ref{fig:mab-sawr}--\ref{fig:mab-wesa} illustrates the effect of
the parameter $\alpha$ on the tightness of the predictions.
Decreasing $\alpha$ yields numerically closer alignment with the
observed DCG changes, at the expense of a larger number of bound
violations.
Conversely, larger values of $\alpha$ produce more conservative
predictions that dominate the observed changes for a greater fraction
of realizations.
This trade-off highlights the role of $\alpha$ as a tunable parameter
governing the balance between tightness and robustness of the risk
estimates.

\section{Conclusions and Future Work}
\label{sec:conclusions}

In this paper, we presented a theoretical framework for analysing the
\emph{reranking risk} induced by stochastic ranking policies.
Unlike prior work that evaluates stochastic rankers primarily through expected
effectiveness or exposure after reranking, our approach addresses an ex ante
question: before applying stochastic reranking, how large can the induced
variation in rank-based effectiveness be?
Our theoretical analysis demonstrates that reranking risk is governed by two
fundamental structural factors: the positions of relevant documents in the
initial ranking, and the degree of rank movement permitted by the stochastic
policy.
For the single relevant document case, we derived exact expressions and
asymptotic bounds for the expected change in reciprocal rank under both uniform
and locality-biased rank perturbations.
We extended the analysis to multiple relevant documents using DCG, yielding
additive bounds that decompose reranking risk into contributions from individual
recall points.

We empirically validated the theory using stochastic ranking runs submitted to
the TREC Fairness 2022 track.
Despite the simplifying assumptions required for analytical tractability, the
observed effectiveness changes closely followed the qualitative trends and
scaling behaviour predicted by the theory, particularly for the majority of
low-variation query--ranker realizations.

While the proposed framework provides a principled analytical
characterization of reranking risk, it is not intended as a prescriptive
reranking algorithm or an optimisation strategy in its current form.
Rather, it offers an approximate sensitivity analysis based on idealised
perturbation models that capture key structural properties of stochastic
ranking.
Consequently, the theoretical predictions are not expected to yield tight
pointwise guarantees for all query--ranker realisations, especially under
practical policies that incorporate complex sources of randomness and
additional constraints.
This is also reflected in the empirical observations: although the predicted changes in nDCG generally align well with the observed effectiveness changes, deviations do occur.
We therefore view the framework as a useful analytical abstraction for
understanding typical reranking behaviour and for guiding intuition, rather than as a strict theoretic guarantee.

The notion of reranking risk opens several promising directions for future work.
One avenue is the development of \emph{risk-aware} stochastic ranking policies
that explicitly balance relevance, fairness, and effectiveness stability by
modulating their degree of randomisation.
Another direction is to integrate reranking risk estimation with user
interaction models, for example by leveraging click data or inferred relevance
signals to obtain query-specific risk control in online settings.
More broadly, bridging the gap between theoretical characterisation and
actionable reranking strategies---such as incorporating risk estimates into optimisation objectives or decision policies---remains an important direction for
future work.

\section*{Acknowledgements}
The author thanks the anonymous reviewers for their careful reading and constructive
comments. Their feedback helped improve the clarity and correctness of the theoretical
analysis.

\appendix

\section{Appendix}

\subsection{Asymptotic Behaviour $S(\beta) = \sum_{d=1}^{\infty} e^{-\beta d}$}
\label{app:normalization}

\begin{equation}
S(\beta)
=
2\,\frac{e^{-\beta}}{1 - e^{-\beta}}.
\label{eq:zk_closed_form}
\end{equation}
To study the asymptotic behaviour of \eqref{eq:zk_closed_form}, we expand the exponential
function for small $\beta$:
\[
e^{-\beta}
=
1 - \beta + \frac{\beta^2}{2} + \pazocal{O}(\beta^3).
\]
It follows that
\[
1 - e^{-\beta}
=
\beta + \pazocal{O}(\beta^2).
\]
Substituting these expressions into \eqref{eq:zk_closed_form} yields
\[
S(\beta)=
2\,\frac{1 - \beta + \pazocal{O}(\beta^2)}{\beta + \pazocal{O}(\beta^2)}
=
\frac{2}{\beta}\left( 1 + \pazocal{O}(\beta) \right) = \Theta\!\left( \frac{1}{\beta} \right),
\quad \text{as } \beta \to 0^+.
\]

\subsection{Asymptotic Behaviour of $S(\beta)=\sum_{d=1}^{\infty} d e^{-\beta d}$}
\label{app:geom_sum}

We analyse the asymptotic behaviour of the series
\[
S(\beta) \;=\; \sum_{d=1}^{\infty} d \, e^{-\beta d},
\qquad \beta > 0,
\]
as $\beta \to 0^+$.
Using the standard identity for geometric series derivatives,
\[
\sum_{d=1}^{\infty} d x^d
=
\frac{x}{(1-x)^2},
\qquad |x| < 1,
\]
and substituting $x = e^{-\beta}$, we obtain the closed-form expression
\begin{equation}
S(\beta)
=
\frac{e^{-\beta}}{(1 - e^{-\beta})^2}.
\label{eq:closed_form}
\end{equation}
To study the asymptotic behaviour of \eqref{eq:closed_form} as $\beta \to 0^+$,
we expand the exponential function using its Taylor series:
\[
e^{-\beta}
=
1 - \beta + \frac{\beta^2}{2} + \pazocal{O}(\beta^3).
\]
It follows that
\[
1 - e^{-\beta}
=
\beta - \frac{\beta^2}{2} + \pazocal{O}(\beta^3),
\]
and therefore
\[
(1 - e^{-\beta})^2
=
\beta^2 \left( 1 - \beta + \pazocal{O}(\beta^2) \right).
\]

Substituting these expansions into \eqref{eq:closed_form}, we have
\[
S(\beta)
=
\frac{1 - \beta + \pazocal{O}(\beta^2)}{\beta^2 \left( 1 - \beta + \pazocal{O}(\beta^2) \right)}
=
\frac{1}{\beta^2} \left( 1 + \pazocal{O}(\beta) \right),
\quad \text{as } \beta \to 0^+.
\]

Hence, we conclude that
\begin{equation}
\sum_{d=1}^{\infty} d \, e^{-\beta d}
=
\Theta\!\left( \frac{1}{\beta^2} \right),
\qquad \beta \to 0^+.
\end{equation}

\bibliographystyle{ACM-Reference-Format}
\balance
\bibliography{refs}

@inproceedings{DBLP:conf/trec/JanichMO22,
  author       = {Thomas J{\"{a}}nich and
                  Graham McDonald and
                  Iadh Ounis},
  title        = {University of Glasgow Terrier Team at the {TREC} 2022 Fair Ranking
                  Track},
  booktitle    = {{TREC}},
  series       = {{NIST} Special Publication},
  volume       = {500-338},
  publisher    = {National Institute of Standards and Technology {(NIST)}},
  year         = {2022}
}

@article{Wu2022JointME,
  title={Joint Multisided Exposure Fairness for Recommendation},
  author={Haolun Wu and Bhaskar Mitra and Chen Ma and Fernando Diaz and Xue Liu},
  journal={Proceedings of the 45th International ACM SIGIR Conference on Research and Development in Information Retrieval},
  year={2022},
  url={https://api.semanticscholar.org/CorpusID:248495930}
}

@article{Togashi2022FairMF,
  title={Fair Matrix Factorisation for Large-Scale Recommender Systems},
  author={Riku Togashi and Kenshi Abe},
  journal={ArXiv},
  year={2022},
  volume={abs/2209.04394},
  url={https://api.semanticscholar.org/CorpusID:252185173}
}

@article{Oosterhuis2021ComputationallyEO,
  title={Computationally Efficient Optimization of Plackett-Luce Ranking Models for Relevance and Fairness},
  author={Harrie Oosterhuis},
  journal={Proceedings of the 44th International ACM SIGIR Conference on Research and Development in Information Retrieval},
  year={2021},
  url={https://api.semanticscholar.org/CorpusID:233481799}
}

@inproceedings{Guo2023InferencetimeSR,
  title={Inference-time Stochastic Ranking with Risk Control},
  author={Ruocheng Guo and Jean-François Ton and Yang Liu},
  year={2023},
  url={https://api.semanticscholar.org/CorpusID:259137461}
}

@article{Vardasbi2022ProbabilisticPG,
  title={Probabilistic Permutation Graph Search: Black-Box Optimization for Fairness in Ranking},
  author={Ali Vardasbi and Fatemeh Sarvi and M. de Rijke},
  journal={Proceedings of the 45th International ACM SIGIR Conference on Research and Development in Information Retrieval},
  year={2022},
  url={https://api.semanticscholar.org/CorpusID:248476139}
}

@article{Bruch2020AST,
  title={A Stochastic Treatment of Learning to Rank Scoring Functions},
  author={Sebastian Bruch and Shuguang Han and Michael Bendersky and Marc Najork},
  journal={Proceedings of the 13th International Conference on Web Search and Data Mining},
  year={2020},
  url={https://api.semanticscholar.org/CorpusID:209388063}
}

@article{Gorantla2023OptimizingGP,
  title={Optimizing Group-Fair Plackett-Luce Ranking Models for Relevance and Ex-Post Fairness},
  author={Sruthi Gorantla and Eshaan Bhansali and Amit Deshpande and Anand Louis},
  journal={ArXiv},
  year={2023},
  volume={abs/2308.13242},
  url={https://api.semanticscholar.org/CorpusID:261214757}
}

@inproceedings{diazcikm2020,
author = {Diaz, Fernando and Mitra, Bhaskar and Ekstrand, Michael D. and Biega, Asia J. and Carterette, Ben},
title = {Evaluating Stochastic Rankings with Expected Exposure},
year = {2020},
isbn = {9781450368599},
publisher = {Association for Computing Machinery},
address = {New York, NY, USA},
url = {https://doi.org/10.1145/3340531.3411962},
doi = {10.1145/3340531.3411962},
booktitle = {Proceedings of the 29th ACM International Conference on Information \& Knowledge Management},
pages = {275–284},
numpages = {10},
location = {Virtual Event, Ireland},
series = {CIKM '20}
}

@inproceedings{rsd_haggai,
author = {Roitman, Haggai and Erera, Shai and Weiner, Bar},
title = {Robust Standard Deviation Estimation for Query Performance Prediction},
year = {2017},
isbn = {9781450344906},
publisher = {Association for Computing Machinery},
address = {New York, NY, USA},
url = {https://doi.org/10.1145/3121050.3121087},
doi = {10.1145/3121050.3121087},
booktitle = {Proceedings of the ACM SIGIR International Conference on Theory of Information Retrieval},
pages = {245–248},
numpages = {4},
location = {Amsterdam, The Netherlands},
series = {ICTIR '17}
}

@inproceedings{uef_kurland_sigir10,
author = {Shtok, Anna and Kurland, Oren and Carmel, David},
title = {Using Statistical Decision Theory and Relevance Models for Query-Performance Prediction},
year = {2010},
booktitle = {Proceedings of the 33rd International ACM SIGIR Conference on Research and Development in Information Retrieval},
publisher = {Association for Computing Machinery},
address = {New York, NY, USA},
pages = {259–266},
series = {SIGIR '10}
}

@article{kurland_tois12,
author = {Shtok, Anna and Kurland, Oren and Carmel, David and Raiber, Fiana and Markovits, Gad},
title = {Predicting Query Performance by Query-Drift Estimation},
year = {2012},
publisher = {Association for Computing Machinery},
volume = {30},
number = {2},
journal = {ACM Trans. Inf. Syst.},
articleno = {11},
numpages = {35},
}

@inproceedings{sen,
  author    = {Procheta Sen and
              Debasis Ganguly},
  title     = {Towards Socially Responsible AI: Cognitive Bias-Aware Multi-Objective Learning},
  booktitle = {Proc. of AAAI 2020},
  pages     = {-},
  year      = {2006},
}

@article{jarvelin2002cumulated,
  title={Cumulated gain-based evaluation of IR techniques},
  author={J{\"a}rvelin, Kalervo and Kek{\"a}l{\"a}inen, Jaana},
  journal={TOIS},
  year={2002}
}

@inproceedings{joachims2005accurately,
  title={Accurately Interpreting Clickthrough Data as Implicit Feedback},
  author={Joachims, Thorsten},
  booktitle={SIGIR},
  year={2005}
}

@inproceedings{Singh2019PolicyLF,
  title={Policy Learning for Fairness in Ranking},
  author={Ashudeep Singh and Thorsten Joachims},
  booktitle={Neural Information Processing Systems},
  year={2019},
  pages={5426-5436},
  url={https://api.semanticscholar.org/CorpusID:60440362}
}

@article{Bower2021IndividuallyFR,
  title={Individually Fair Ranking},
  author={Amanda Bower and Hamid Eftekhari and Mikhail Yurochkin and Yuekai Sun},
  journal={ArXiv},
  year={2021},
  volume={abs/2103.11023},
  url={https://api.semanticscholar.org/CorpusID:232307455}
}

@article{pdd-tois,
author = {Datta, Suchana and Faggioli, Guglielmo and Ferro, Nicola and Ganguly, Debasis and Muntean, Cristina Ioana and Perego, Raffaele and Tonellotto, Nicola},
title = {Projection-Displacement-Based Query Performance Prediction for Embedded Space of Dense Retrievers},
year = {2025},
issue_date = {January 2026},
publisher = {Association for Computing Machinery},
address = {New York, NY, USA},
volume = {44},
number = {1},
issn = {1046-8188},
url = {https://doi.org/10.1145/3765617},
doi = {10.1145/3765617},
journal = {ACM Trans. Inf. Syst.},
month = oct,
articleno = {7},
numpages = {30}
}

@inproceedings{Biega2018EquityOA,
author = {Biega, Asia J. and Gummadi, Krishna P. and Weikum, Gerhard},
title = {Equity of Attention: Amortizing Individual Fairness in Rankings},
year = {2018},
booktitle = {The 41st International ACM SIGIR Conference on Research \& Development in Information Retrieval},
pages = {405–414},
series = {SIGIR '18}
}

@article{Heuss2022FairnessOE,
  title={Fairness of Exposure in Light of Incomplete Exposure Estimation},
  author={Maria Heuss and Fatemeh Sarvi and M. de Rijke},
  journal={Proceedings of the 45th International ACM SIGIR Conference on Research and Development in Information Retrieval},
  year={2022},
  url={https://api.semanticscholar.org/CorpusID:249059216}
}

@inproceedings{Yadav2019PolicyGradientTO,
author = {Yadav, Himank and Du, Zhengxiao and Joachims, Thorsten},
title = {Policy-Gradient Training of Fair and Unbiased Ranking Functions},
year = {2021},
booktitle = {Proceedings of the 44th International ACM SIGIR Conference on Research and Development in Information Retrieval},
pages = {1044–1053},
series = {SIGIR '21}
}

@inproceedings{craswell2008experimental,
author = {Craswell, Nick and Zoeter, Onno and Taylor, Michael and Ramsey, Bill},
title = {An experimental comparison of click position-bias models},
year = {2008},
booktitle = {Proceedings of the 2008 International Conference on Web Search and Data Mining},
pages = {87–94},
series = {WSDM '08}
}

@inproceedings{chapelle2009dynamic,
author = {Chapelle, Olivier and Zhang, Ya},
title = {A dynamic bayesian network click model for web search ranking},
year = {2009},
booktitle = {Proceedings of the 18th International Conference on World Wide Web},
pages = {1–10},
series = {WWW '09}
}

@inproceedings{qpp-bert,
author = {Arabzadeh, Negar and Khodabakhsh, Maryam and Bagheri, Ebrahim},
title = {BERT-QPP: Contextualized Pre-Trained Transformers for Query Performance Prediction},
year = {2021},
publisher = {Association for Computing Machinery},
booktitle = {Proceedings of the 30th ACM International Conference on Information and Knowledge Management},
pages = {2857–2861},
numpages = {5},
location = {Virtual Event, Queensland, Australia},
series = {CIKM '21}
}

@inproceedings{joachims-kdd2018,
author = {Singh, Ashudeep and Joachims, Thorsten},
title = {Fairness of Exposure in Rankings},
year = {2018},
isbn = {9781450355520},
publisher = {Association for Computing Machinery},
address = {New York, NY, USA},
url = {https://doi.org/10.1145/3219819.3220088},
doi = {10.1145/3219819.3220088},
booktitle = {Proceedings of the 24th ACM SIGKDD International Conference on Knowledge Discovery \& Data Mining},
pages = {2219–2228},
numpages = {10},
location = {London, United Kingdom},
series = {KDD '18}
}

@article{mono-duo-lin,
  author       = {Ronak Pradeep and
                  Rodrigo Frassetto Nogueira and
                  Jimmy Lin},
  title        = {The Expando-Mono-Duo Design Pattern for Text Ranking with Pretrained
                  Sequence-to-Sequence Models},
  journal      = {CoRR},
  volume       = {abs/2101.05667},
  year         = {2021},
  url          = {https://arxiv.org/abs/2101.05667},
  eprinttype    = {arXiv},
  eprint       = {2101.05667},
  bibsource    = {dblp computer science bibliography, https://dblp.org}
}

@article{RoyEtAl2019,
  author    = {Dwaipayan Roy and
               Debasis Ganguly and
               Mandar Mitra and
               Gareth J. F. Jones},
  title     = {Estimating Gaussian mixture models in the local neighbourhood of embedded
               word vectors for query performance prediction},
  journal   = {Inf. Process. Manag.},
  volume    = {56},
  number    = {3},
  pages     = {1026--1045},
  year      = {2019},
  url       = {https://doi.org/10.1016/j.ipm.2018.10.009},
  doi       = {10.1016/j.ipm.2018.10.009},
  bibsource = {dblp computer science bibliography, https://dblp.org}
}

@inproceedings{query_variants_kurland,
  author    = {Oleg Zendel and
               Anna Shtok and
               Fiana Raiber and
               Oren Kurland and
               J. Shane Culpepper},
  title     = {Information Needs, Queries, and Query Performance Prediction},
  booktitle = {Proc. of {SIGIR '19}},
  publisher = {Association for Computing Machinery},
  address = {New York, NY, USA},
  pages     = {395--404},
  year      = {2019}
}

@article{DBLP:journals/tois/DattaGMG23,
  author       = {Suchana Datta and
                  Debasis Ganguly and
                  Mandar Mitra and
                  Derek Greene},
  title        = {A Relative Information Gain-based Query Performance Prediction Framework
                  with Generated Query Variants},
  journal      = {{ACM} Trans. Inf. Syst.},
  volume       = {41},
  number       = {2},
  pages        = {38:1--38:31},
  year         = {2023}
}

@inproceedings{DBLP:conf/trec/EkstrandMR022,
  author       = {Michael D. Ekstrand and
                  Graham McDonald and
                  Amifa Raj and
                  Isaac Johnson},
  title        = {Overview of the {TREC} 2022 Fair Ranking Track},
  booktitle    = {{TREC}},
  series       = {{NIST} Special Publication},
  volume       = {500-338},
  publisher    = {National Institute of Standards and Technology {(NIST)}},
  year         = {2022}
}

@inproceedings{DBLP:conf/ecir/SantraBG26,
  author       = {Payel Santra and
                  Partha Basuchowdhuri and
                  Debasis Ganguly},
  title        = {Breaking Flat: {A} Generalised Query Performance Prediction Evaluation
                  Framework},
  booktitle    = {{ECIR} {(2)}},
  series       = {Lecture Notes in Computer Science},
  pages        = {155--171},
  publisher    = {Springer},
  year         = {2026}
}

@inproceedings{santra2025hf,
  title={HF-RAG: Hierarchical Fusion-based RAG with Multiple Sources and Rankers},
  author={Santra, Payel and Ghosh, Madhusudan and Ganguly, Debasis and Basuchowdhuri, Partha and Naskar, Sudip Kumar},
  booktitle={Proceedings of the 34th ACM International Conference on Information and Knowledge Management},
  pages={5202--5207},
  year={2025}
}

@inproceedings{DBLP:conf/ecir/TianGM25,
  author       = {Fangzheng Tian and
                  Debasis Ganguly and
                  Craig Macdonald},
  title        = {Is Relevance Propagated from Retriever to Generator in RAG?},
  booktitle    = {{ECIR} {(1)}},
  series       = {Lecture Notes in Computer Science},
  volume       = {15572},
  pages        = {32--48},
  publisher    = {Springer},
  year         = {2025}
}

@inproceedings{DBLP:conf/ecir/SantraBG26a,
  author       = {Payel Santra and
                  Partha Basuchowdhuri and
                  Debasis Ganguly},
  title        = {Beyond Correlations: {A} Downstream Evaluation Framework for Query
                  Performance Prediction},
  booktitle    = {{ECIR} {(2)}},
  series       = {Lecture Notes in Computer Science},
  volume       = {16484},
  pages        = {518--526},
  publisher    = {Springer},
  year         = {2026}
}

@inproceedings{rpp-gpp,
  author       = {Fangzheng Tian and
                  Debasis Ganguly and
                  Craig Macdonald},
  title        = {Predicting Retrieval Utility and Answer Quality in Retrieval-Augmented
                  Generation},
  booktitle    = {{ECIR} {(1)}},
  series       = {Lecture Notes in Computer Science},
  pages        = {368--385},
  publisher    = {Springer},
  year         = {2026}
}

@inproceedings{DBLP:conf/ecir/DattaGMG24,
  author       = {Suchana Datta and
                  Debasis Ganguly and
                  Sean MacAvaney and
                  Derek Greene},
  title        = {A Deep Learning Approach for Selective Relevance Feedback},
  booktitle    = {{ECIR} {(2)}},
  series       = {Lecture Notes in Computer Science},
  volume       = {14609},
  pages        = {189--204},
  publisher    = {Springer},
  year         = {2024}
}

@inproceedings{wig_croft_SIGIR07,
author = {Zhou, Yun and Croft, W. Bruce},
title = {Query Performance Prediction in Web Search Environments},
year = {2007},
address = {New York, NY, USA},
booktitle = {Proc. 30th Annual International ACM SIGIR Conference on Research and Development in Information Retrieval},
publisher = {Association for Computing Machinery},
pages = {543–550},
series = {SIGIR '07}
}

@inproceedings{FaggioliFerroEtAl2023,
   author    = {Faggioli, Guglielmo and Ferro, Nicola and Muntean, Cristina and Perego, Raffaele and Tonellotto, Nicola},
   title     = {{A Geometric Framework for Query Performance Prediction in Conversational Search}},
   year      = {2023},
   booktitle = {Proceedings of 46th international ACM SIGIR Conference on Research \& Development in Information Retrieval, SIGIR 2023 July 23–27, 2023, Taipei, Taiwan},
   pages        = {1355--1365},
   publisher = {{ACM}},
   doi       = {https://doi.org/10.1145/3539618.3591625}
}

@article{Oh2023ApplyingRL,
  title={Applying Reinforcement Learning for Enhanced Cybersecurity against Adversarial Simulation},
  author={Sang Ho Oh and Min Ki Jeong and Hyung Chan Kim and Jongyoul Park},
  journal={Sensors (Basel, Switzerland)},
  year={2023},
  volume={23},
  url={https://api.semanticscholar.org/CorpusID:257452412}
}

@inproceedings{Feng2022HasCG,
  title={Has CEO Gender Bias Really Been Fixed? Adversarial Attacking and Improving Gender Fairness in Image Search},
  author={Yunhe Feng and C. Shah},
  booktitle={AAAI Conference on Artificial Intelligence},
  year={2022},
  url={https://api.semanticscholar.org/CorpusID:247981991}
}

@inproceedings{DBLP:conf/aaai/SenG20,
  author    = {Procheta Sen and
               Debasis Ganguly},
  title     = {Towards Socially Responsible {AI:} Cognitive Bias-Aware Multi-Objective
               Learning},
  booktitle = {{AAAI}},
  pages     = {2685--2692},
  publisher = {{AAAI} Press},
  year      = {2020}
}

@inproceedings{colbert,
  author       = {Omar Khattab and
                  Matei Zaharia},
  editor       = {Jimmy X. Huang and
                  Yi Chang and
                  Xueqi Cheng and
                  Jaap Kamps and
                  Vanessa Murdock and
                  Ji{-}Rong Wen and
                  Yiqun Liu},
  title        = {ColBERT: Efficient and Effective Passage Search via Contextualized
                  Late Interaction over {BERT}},
  booktitle    = {Proceedings of the 43rd International {ACM} {SIGIR} conference on
                  research and development in Information Retrieval, {SIGIR} 2020, Virtual
                  Event, China, July 25-30, 2020},
  pages        = {39--48},
  publisher    = {{ACM}},
  year         = {2020},
  url          = {https://doi.org/10.1145/3397271.3401075},
  doi          = {10.1145/3397271.3401075},
  bibsource    = {dblp computer science bibliography, https://dblp.org}
}

@inproceedings{jaenich2022university,
  title={University of Glasgow Terrier Team at the TREC 2022 Fair Ranking Track.},
  author={Jaenich, Thomas and McDonald, Graham and Ounis, Iadh},
  booktitle={TREC},
  year={2022}
}

@inproceedings{datta2022pointwise,
  title={A'Pointwise-Query, Listwise-Document'based Query Performance Prediction Approach},
  author={Datta, Suchana and MacAvaney, Sean and Ganguly, Debasis and Greene, Derek},
  booktitle={Proceedings of the 45th International ACM SIGIR Conference on Research and Development in Information Retrieval},
  pages={2148--2153},
  year={2022}
}

@inproceedings{datta2022deep,
  title={Deep-qpp: A pairwise interaction-based deep learning model for supervised query performance prediction},
  author={Datta, Suchana and Ganguly, Debasis and Greene, Derek and Mitra, Mandar},
  booktitle={Proceedings of the fifteenth ACM international conference on web search and data mining},
  pages={201--209},
  year={2022}
}

\end{document}